\documentclass[a4paper,11pt]{article}
\usepackage{fullpage}

\pdfoutput=1
\usepackage[colorlinks=true,urlcolor=Blue,citecolor=Green,linkcolor=BrickRed]{hyperref}
\usepackage[dvipsnames]{xcolor}
\usepackage[T1]{fontenc}
\usepackage[utf8]{inputenc}
\usepackage{amssymb,amsmath,amsfonts,amsthm}
\usepackage{mathtools}
\usepackage{enumerate}
\usepackage[noadjust]{cite}
\usepackage{authblk}

\usepackage{tikz}
\usetikzlibrary{
  arrows.meta,
  backgrounds,
  calc,
  decorations.pathmorphing,
  decorations.pathreplacing,
  fit,
  positioning,
  shapes,
  shapes.geometric}

\def\dd{\mathinner{.\,.}}
\newcommand{\Oh}{\mathcal{O}}

\newcommand{\set}[1]{\left\lbrace #1 \right\rbrace}

\newcommand{\floor}[1]{\left\lfloor #1 \right\rfloor}

\DeclareMathOperator{\row}{row}
\DeclareMathOperator{\col}{col}
\DeclareMathOperator{\orig}{orig}
\DeclareMathOperator{\depth}{depth}
\DeclareMathOperator{\Span}{span}
\DeclareMathOperator{\rank}{rank}
\DeclareMathOperator{\select}{select}

\newcommand{\Ddep}{\mathcal{D}}
\newcommand{\Tcol}{\mathcal{T}}

\newcommand{\Top}{\top}
\newcommand{\Bot}{\bot}
\newcommand{\weight}{w}

\newcommand{\act}[1]{A_{#1}}
\newcommand{\acteq}[1]{S_{#1}}
\newcommand{\actup}[1]{T_{#1}}
\newcommand{\actL}[1]{L_{#1}}
\newcommand{\actT}[2]{T_{#1,#2}}

\newcommand{\interval}[1]{\Span(#1)}

\usepackage{thm-restate}
\newtheorem{theorem}{Theorem}[section]
\newtheorem{lemma}[theorem]{Lemma}

\newtheorem{observation}[theorem]{Observation}
\newtheorem{definition}[theorem]{Definition}

\usepackage[capitalise]{cleveref}
\crefname{theorem}{Theorem}{Theorems}
\crefname{lemma}{Lemma}{Lemmas}
\crefname{corollary}{Corollary}{Corollaries}
\crefname{definition}{Definition}{Definitions}
\crefname{observation}{Observation}{Observations}
\crefname{claim}{Claim}{Claims}
\crefname{fact}{Fact}{Facts}
\crefname{conjecture}{Conjecture}{Conjectures}
\crefname{remark}{Remark}{Remarks}

\makeatletter
\newcommand\thankssymb[1]{\textsuperscript{\@fnsymbol{#1}}}
\makeatother

\title{Near-Optimal and Efficient Encoding for Two-Dimensional Range Minimum Queries}

\author[1]{Pawe\l{} Gawrychowski\thanks{Partially supported by the Polish National Science Centre grant number 2023/51/B/ST6/01505.}}
\author[1]{Adam G\'orkiewicz\thankssymb{1}}
\author[2]{Srinivasa Rao Satti}
\affil[1]{Institute of Computer Science, University of Wroc\l{}aw, Poland}
\affil[2]{Department of Computer Science, Norwegian University of Science and Technology, Trondheim, Norway}

\date{}

\begin{document}

\maketitle

\begin{abstract}
	We consider the 2D RMQ encoding problem: given an $m\times n$ array of $mn$ elements over a total order, encode it such that, for any query rectangle, the position of its maximum element can be reported without accessing the original array.
	For $m \le n$, it is known how to encode the array in $\Oh(mn \min\{m, \log n\})$ bits with $\Oh(1)$-time queries [Brodal et al., Algorithmica 2012], and also how to obtain an asymptotically optimal encoding consisting of $\Oh(mn \log m)$ bits [Brodal et al., ESA 2013].
	However, the latter approach does not prove any guarantee on the query time, and it appears to be inherently sequential: it requires scanning the whole encoding to answer a query. We design a different encoding that uses near-optimal space while allowing for efficient queries.
	More concretely, for every parameter $\kappa\in[1, \log\log n]$, our encoding uses $\Oh(\kappa mn(\log m+\log\log n))$ bits and answers 2D RMQ queries in $\Oh(\log^{1/\kappa}n)$ time.
\end{abstract}

\section{Introduction}

Range minimum/maximum data structures are important substructures in the design of succinct and compressed data structures. The \emph{range maximum query} (RMQ) problem asks us to preprocess an array so that, given a query range, we can return the position of a maximum element in that range (traditionally, RMQ refers to range \emph{minimum} queries; we phrase the problem in terms of maxima, which is more convenient for our presentation, and the two variants are equivalent by negating all elements).
In the one-dimensional setting, the input is an array of length $n$ and a query is an interval \([i \dd j]\).
In two dimensions, the input is an $m\times n$ array and a query is an axis-parallel rectangle \([i_1 \dd i_2]\times[j_1 \dd j_2]\).

The RMQ problem has been studied in the literature in two distinct models: \emph{indexing} and \emph{encoding} models~\cite{Raman15}.  In the indexing model, the input array remains available at query time, and the data structure stores only auxiliary information, referred to as the \emph{index}, to support the queries efficiently. 
In the encoding model, which is more relevant to this paper, queries are answered without access to the original input, and hence only the information relevant to answer queries is encoded within the data structure.
The motivation behind the encoding model is twofold. First, it allows us to better understand the mathematical properties of the particular queries, and bound their effective entropy. Second, in some applications we need to support queries on an implicitly defined input that is well-defined but not stored at all (and not easy to compute in the query time).
There has been a significant amount of research in the recent past on designing indexes and encodings for various range queries on arrays, such as 
1D RMQ~\cite{FischerH10,FischerH11,BrodalDR12,Skala13,DavoodiRS17,FischerMN08,BarbayFN12,GawrychowskiJMW20,MunroNBW21,LiuY20,Liu21,JoS26},
2D RMQ~\cite{AmirFL07,DemaineLW14,YuanA10,BrodalDR12,BrodalDLRS16,BrodalBD13,GolinIKRSS16,JoS25,KaplanMNS17,GawrychowskiMW20},
range min\&max~\cite{GawrychowskiN15,JoS16,Tsur19,JoK25},
range median and range mode~\cite{KrizancMS05,Petersen08,PetersenG09,BoseKMT05,El-Zein0MNS19,GreveJLT10,ChanDLMW14},
range selection and range top-$k$~\cite{DavoodiNRR16,GrossiINRR17,GawrychowskiN15,JoLS21,El-Zein0MNS19}, and
range majority and range minority~\cite{KarpinskiN08,DurocherHMNS13,ChanDSW15,GagieHMN11,BelazzouguiGMNN21,NavarroT16,GawrychowskiN17,GagieHN20}. 
See~\cite{JoS25} for a survey on encodings for range queries on arrays.

RMQ is a basic primitive in data structures and algorithms.  In one dimension, it is tightly connected to the lowest common ancestor (LCA) problem on trees, and has applications in suffix-array based text indexing, compressed suffix trees, document retrieval, geometric indexing and many others. 
In two dimensions, it is a special case of the  orthogonal range-searching problem and is relevant as a subroutine in geometric indexing applications. The 1D problem is by now very well understood, with almost tight bounds known both in the indexing and the encoding models. On the other hand, the 2D problem is qualitatively more subtle: while almost optimal space indexes supporting constant-time queries are known, the encoding complexity is not well-settled. The gap between optimal encoding space and efficient query support for the 2D RMQ encoding problem remains one of the main structural differences between 1D and 2D RMQ, which is the problem we address in this paper.

\subsection{Previous results}

\paragraph{1D case.}
The 1D RMQ problem has been widely studied, and many linear-space (i.e., $\Oh(n \log n)$ bits) data structures with constant query-time have been proposed in the literature~\cite{BenderF00,HarelT84,SchieberV88}, most of them using the fact that RMQ is reducible to the LCA on the \emph{Cartesian tree} of the input array. The index space is further reduced to $\Oh(n/c)$ bits while supporting queries in $\Oh(c)$ time~\cite{FischerH11}, which is shown to be asymptotically optimal~\cite{BrodalDR12}. For the encoding model, the first breakthrough was due to Sadakane~\cite{Sadakane07} who gave an encoding of size $4n+o(n)$ bits that supports queries in constant time. The space usage is reduced to the optimal $2n+o(n)$ bits by Fischer and Heun~\cite{FischerH11}.
More recent work on 1D RMQ focused on the tradeoff between the query time and the lower order terms in the space usage: Liu~\cite{Liu21} showed that any data structure supporting RMQ queries using $2n - 1.5 \log n + r$ bits and query time $t$, must satisfy $r = n/(\log n)^{\Oh(t \log^2 t)}$, which closely matches the encoding of Liu and Yu~\cite{LiuY20} that uses $2n - 1.5 \log n +n/(\frac{\log n}{t})^{t}$ bits to support queries in $\Oh(t)$ time, for any parameter $t> 1$.
Thus the encoding complexity of 1D RMQ has been well-settled. 

\paragraph{2D case.}
For an $m \times n$ input array, the multidimensional solution of Gabow et al.~\cite{GabowBT84} gives a data structure that takes $\Oh(N \log^2 N)$ bits and supports queries in $\Oh(\log N)$ time, where $N = mn$. The subsequent results~\cite{Poon03,AmirFL07,YuanA10} improved these bounds to $\Oh(N \log N)$ bits and $\Oh(1)$ query time. Brodal et al.~\cite{BrodalDR12} designed the first linear-bit (i.e., $\Oh(N)$ bits) index that supports queries in constant time. They also proposed indexes that tradeoff query time for index size, which was further improved by~\cite{BrodalDLRS16}.

For the encoding model, Demaine et al.~\cite{DemaineLW14} proved that there is no natural Cartesian-tree analogue for the 2D RMQ problem, and in particular showed a counting lower bound of $\Omega(n^2\log n)$ bits for $n\times n$ arrays. Brodal et al.~\cite{BrodalDR12} extended the lower bound to $\Omega(mn\log m)$ bits for $m\times n$ arrays with $m\le n$. They also gave an $\Oh(1)$-query-time encoding using
$\Oh(mn\cdot \min\{m,\log n\})$ bits, which is optimal when $\log n=\Theta(\log m)$ but not in general.  The general encoding complexity was settled shortly afterwards by Brodal et al.~\cite{BrodalBD13}, who obtained an encoding that takes $\Theta(mn\log m)$ bits, matching the lower bound.
However, their encoding does not support queries efficiently, and answering one may require decoding essentially the whole representation.
Thus, prior to our work, there is a large gap between two extremes for the 2D RMQ encodings:
\begin{itemize}
    \item an encoding with $\Oh(1)$ query time using $\Oh(mn\min\{m,\log n\})$ bits, and
    \item an asymptotically optimal $\Oh(mn \log m)$-bit encoding without efficient query support.
\end{itemize}

Bridging this gap is the main motivation for our work. The natural question is whether one can approach the optimal $\Theta(mn\log m)$ space bound while still supporting fast queries.
Our main result answers this positively by showing a trade-off between space and query time.

\begin{theorem}\label{thm:main}
For any integer $\kappa\in[1, \log\log n]$, there is a $2$D-RMQ encoding for an $m\times n$ array that uses
$\Oh(\kappa mn(\log m+\log\log n))$
bits and answers queries in
$\Oh(\log^{1/\kappa}n)$
time.
\end{theorem}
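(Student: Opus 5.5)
We would prove the statement by a recursive decomposition along the long dimension (columns, $n$), with $\kappa$ levels of recursion and branching factor $b=\log^{1/\kappa} n$ at each level, so that the number of columns shrinks from $n$ to $n/b$, $n/b^2$, and so on. Throughout we assume $m\le n$, and we may assume $m<\log n$, since otherwise the $\Oh(mn\min\{m,\log n\})=\Oh(mn\log n)=\Oh(mn\log m)$-bit encoding of Brodal et al.\ with $\Oh(1)$ queries already suffices.

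The basic building block is a block of $B$ consecutive columns. A query rectangle $[i_1\dd i_2]\times[j_1\dd j_2]$ is split into three parts: a partial block at the left end, a partial block at the right end, and a middle range of full blocks. The answer is the maximum over these three candidates, and each comparison between candidates has to be resolvable from the encoding.

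\textbf{Middle part.} For every row range $[i_1\dd i_2]$ (there are $\Oh(m^2)$ of them) we form the sequence of block maxima. Storing a 1D RMQ structure (Cartesian tree, $2n/B$ bits) for each such sequence costs $\Oh(m^2 n/B)$ bits, which is at most $\Oh(mn)$ once $B\ge m$. To locate the actual cell of a block maximum, each block and each row range also stores the in-block position of its maximum, which takes $\log(mB)$ bits per entry. Summed over all blocks and row ranges this is $\Oh((n/B)m^2\log(mB))$ bits; choosing $B\approx m\log n$ keeps it within $\Oh(mn(\log m+\log\log n))$.

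\textbf{Recursion on partial blocks.} Each block has width $B$, which is polylogarithmic. Rather than applying a full $B$-column structure to it, we recurse on its columns with a $b$-ary split. A query inside a block then decomposes at each level into $\Oh(1)$ aligned sub-block ranges, and each level costs $\Oh(mn(\log m+\log\log n))$ bits, because positions inside a polylog-width block need only $\Oh(\log m+\log\log n)$ bits. With $\kappa$ levels the total space is $\kappa$ times this, matching the claimed bound. At a level with sub-blocks of width $w$, an in-block range spans up to $b$ sub-blocks. Here we use the fact that $b$ sub-block maxima (each a short local index) can be compared by scanning, giving $\Oh(b)$ time per level. Alternatively, a precomputed small RMQ over $b$ items per row range gives $\Oh(1)$ time but needs more bits, so scanning is the cheaper choice. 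The query time is then $\Oh(\kappa b)=\Oh(\kappa\log^{1/\kappa}n)$; we would absorb the factor $\kappa$ into the bound by adjusting $b$ by a constant factor, noting that $\kappa\le\log\log n$.

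\textbf{Main obstacle: comparisons across parts.} Combining candidates from different parts requires comparing their values, but the encoding stores no values. For two candidates from different blocks we would store, for each block, the \emph{rank pattern} of its prefix and suffix maxima (over each row range) relative to the neighbouring block-level maxima. The key argument is that only $\Oh(\log m+\log\log n)$ bits per cell are needed to resolve every comparison that the query algorithm can actually make. To show this, we would charge each comparison outcome to a cell that is a prefix or suffix maximum for some row range, using the fact that for a fixed boundary column there are only $\Oh(m)$ distinct dominant cells per row range. This staircase/dominance counting, in the style of the argument of Brodal et al.\ (ESA 2013), is where the real difficulty lies. We would attempt it first by encoding, for each boundary, the merged order of the $\Oh(m^2)$ dominant candidates using $\Oh(\log m)$-bit ranks.
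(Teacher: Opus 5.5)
\textbf{There is a genuine gap.} It is exactly the step you call the main obstacle: you give no mechanism for comparing two candidates that come from different parts of your decomposition, and that comparison is essentially the whole content of the theorem.

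The middle candidate is the maximum of $[i_1\dd i_2]$ over a run of full blocks, and that run can end arbitrarily far from the boundary. To compare it with a partial-block candidate, you must know, for every row range and every suffix start inside a block, how far to the right that suffix maximum dominates the block maxima. Stored directly, this is about $\log(n/B)$ bits for each of the $m^2n$ (row range, column) pairs. Even a single 1D RMQ per row range over the column maxima already costs $\Theta(m^2n)$ bits.

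A merged order of candidates at each block boundary does not replace this. The middle maximum is in general not adjacent to the boundary, and order information does not chain. If $p_L$ is a partial-block candidate and $M_1,M_2$ are the maxima of the next two blocks, then $p_L<M_1$ and $M_2<M_1$ say nothing about $p_L$ versus $M_2$.

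The counting fact you plan to charge against is also false. For a fixed boundary and row range, the prefix maxima inside a block can change at every column: make one row increasing and heavier than the rest of the block. So there can be $B$ of them rather than $\Oh(m)$. The same missing primitive is silently used in your recursion, where ``scanning'' $b$ sub-block maxima presupposes that they can be compared.

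\textbf{Two auxiliary steps also fail.}

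The reduction to $m<\log n$ is invalid. $\Oh(mn\log n)=\Oh(mn\log m)$ requires $\log n=\Oh(\log m)$, i.e.\ $m\ge n^{\Omega(1)}$. For $m=\log^2 n$, the encoding of Brodal et al.\ uses $\Theta(mn\log n)$ bits, while the target is $\Oh(\kappa mn\log\log n)$.

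A query time of $\Oh(\kappa b)$ cannot be turned into $\Oh(\log^{1/\kappa}n)$ by changing $b$ by a constant factor. $\kappa$ may be as large as $\log\log n$, and there you would get $\Theta(\log\log n)$ instead of $\Oh(1)$.

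\textbf{The missing idea.} The paper builds on the following structure.
A sparse table over row blocks reduces every query to two candidates $p,q$ that are mutually visible: each is the maximum of its own row between $\col(p)$ and $\col(q)$. Such pairs are co-active in a binary tree $\Tcol$ over the columns.

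They are compared by repeatedly replacing the point with the deeper origin by its predecessor or successor among the points active at that node whose origins are shallower. This replacement preserves the outcome of the comparison.

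The space per point is $\Oh(\log m+\log(m\lambda_u/|S|))$ bits, which sums by the log-sum inequality to $\Oh(mn(\log m+\log\log n))$ per level. This works because each node's active set splits into at most $4m$ row pieces (quarter-rows) in which the weight is monotone in the column.

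A $\tau$-ary tree $\Ddep$ over the depths of $\Tcol$ yields the trade-off. The query time is $\Oh(\tau)$ independently of $\kappa$, because all lifts move between the children of the single node of $\Ddep$ that separates the two initial origin depths.
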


Two regimes of \Cref{thm:main} are worth singling out.
For every constant $\epsilon > 0$, taking $\kappa$ to be a sufficiently large constant yields an encoding of $\Oh(mn(\log m + \log\log n))$ bits with queries in $\Oh(\log^{\epsilon} n)$ time; the space is asymptotically optimal whenever $n$ is at most exponential in $m$.
At the other extreme, taking $\kappa = \log\log n$ drives the query time down to $\Oh(1)$ at the cost of a multiplicative $\log\log n$ blowup in space.

\paragraph{Overview.}
\Cref{sec:reduction} reduces every query to a comparison between two distinguished points of the query rectangle, and \Cref{sec:origins} formalises the structure behind this comparison: every point is assigned an \emph{origin} in an auxiliary tree built over the columns, and the two points produced by the reduction always share enough structure in this tree for the comparison to be resolved efficiently.
\Cref{sec:compare} carries out the encoding itself in two passes.
The baseline version (\Cref{sec:baseline}) stores a small comparison structure at each node of the column tree and answers a query by repeatedly replacing one of the two current points with a substitute whose origin lies closer to the root; this matches \Cref{thm:main} at $\kappa = 1$.
The full version (\Cref{sec:full}) introduces a second tree, built over the depths of the column tree, that guarantees enough progress in each replacement to bound their number by its arity; tuning the arity yields the entire trade-off of \Cref{thm:main}.
Finally, \Cref{sec:implementation} constructs the two local primitives used by both encodings, by decomposing the active points of each node into $\Oh(m)$ \emph{quarter-rows}, inside which the weights are monotone with respect to the column.

\section{Preliminaries}\label{sec:prelim}

For integers $i, j$ we write $[i \dd j]$ to denote $\set{i, i + 1, \dots, j}$.
Throughout the paper, the input is an $m\times n$ array with rows indexed from $1$ to $m$, and columns indexed from $1$ to $n$.
A \emph{point} is a pair $(i,j)\in[1\dd m]\times[1\dd n]$.
We treat the input as a function $\weight$ mapping each point to a distinct weight;
for a point $p=(i,j)$, we write $\weight(p)$ for its weight, $\row(p)\coloneqq i$, and $\col(p)\coloneqq j$.
For a set of points $S$, we write $\max_\weight(S)$ for the unique point $p \in S$ maximising~$\weight(p)$.
A~\emph{query rectangle} is a set of the form $R = [i_1 \dd i_2] \times [j_1 \dd j_2] \subseteq [1\dd m]\times[1\dd n]$, and the 2D RMQ problem asks, given such a rectangle $R$, to return the heaviest point in it, namely $\max_\weight(R)$.

We measure the space in bits and assume the standard word RAM model of computation, with word size $\Theta(\log n)$, when describing the query algorithms.

We make use of the following 1D RMQ encoding.

\begin{lemma}[\cite{FischerH11}]\label{lem:rmq1d}
	For an array of length $n$ with distinct values, there is an RMQ structure using $\Oh(n)$ bits that returns the position of the maximum in any interval in $\Oh(1)$ time.
\end{lemma}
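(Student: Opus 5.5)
We prove \Cref{lem:rmq1d} via the Cartesian tree. Let $A[1\dd n]$ have distinct values, and let $C$ be its max-Cartesian tree. The root of $C$ is the position of the maximum of $A$, and its left and right subtrees are built recursively on the parts of the array to the left and to the right of that position. Every node corresponds to one array position, and an in-order traversal of $C$ visits the positions $1,\dots,n$ in order.

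The standard fact we rely on is that, for $i\le j$, the maximum of $A[i\dd j]$ is at the lowest common ancestor of nodes $i$ and $j$ in $C$. To see this, look at the maximal interval of the array covered by the subtree of that ancestor. It contains both $i$ and $j$, so it contains all of $[i\dd j]$. The ancestor's own position is the maximum of that interval. It also lies in $[i\dd j]$, because $i$ and $j$ sit in different subtrees of the ancestor, or one of them is the ancestor itself. Consequently the RMQ answers depend only on the shape of $C$. A binary tree on $n$ nodes can be stored in $2n+\Oh(1)$ bits, for instance as a balanced-parentheses sequence of a suitable transformation.

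The remaining work is to support the query in $\Oh(1)$ time using only $o(n)$ extra bits. Following Fischer and Heun, we convert $C$ into an ordinal tree $T$ with $n+1$ nodes: add a virtual root, and let the children of each node $v$ be the nodes of its right spine chain. This yields the so-called 2d-Min-heap. We write $T$ as a DFUDS or BP bit string $B$ of length $2n+2$. Let $E$ be the excess array of $B$, that is, the prefix sums with $+1$ for an opening parenthesis and $-1$ for a closing one. Then RMQ on $A$ becomes: map $i$ and $j$ to positions in $B$ using $\mathrm{select}$; compute a $\pm1$ range-minimum on $E$ between these positions; and map the result back using $\mathrm{rank}$ together with a small correction (whether the answer is the position returned or its parent/child).

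Rank and select on $B$ take $\Oh(1)$ time with $o(n)$ extra bits, by the standard two-level directories plus lookup tables over chunks of size $\tfrac12\log n$. For $\pm1$ RMQ on $E$ without storing $E$ explicitly, we split $B$ into blocks of size $\Theta(\log n)$ and superblocks of size $\Theta(\log^3 n)$. For each of these we store sparse-table minima with relative offsets, which costs $o(n)$ bits. Inside a block, minima are found by table lookup on $\tfrac12\log n$-bit chunks, and the table has only $\Oh(\sqrt n\,\mathrm{polylog}\, n)$ entries. A query is then covered by a constant number of in-block and sparse-table lookups, and the candidate minima are compared by their excess values, which we recover from $\mathrm{rank}$.

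The main obstacle is getting the reduction right. We must check that the minimum of the excess in the chosen range of $B$ corresponds exactly to the LCA-like node, with the correct off-by-one adjustment. Handling the case where $i$ is itself an ancestor of $j$ is where care is needed. Since the lemma only asks for $\Oh(n)$ bits, we could instead avoid this delicacy. We would store $B$ in plain BP form for $C$, which takes $\Oh(n)$ bits. We would then answer LCA via range-min on depths in the Euler tour, which is itself an $\Oh(n)$-bit $\pm1$ sequence, using the same block/superblock tables.
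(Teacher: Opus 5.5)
The paper does not prove this lemma; it cites it from Fischer and Heun. Your sketch (Cartesian tree, 2d-heap, DFUDS/BP, $\pm1$ RMQ on the excess sequence with $o(n)$ extra bits via blocks, superblocks and lookup tables) reconstructs that construction, so the route is right. Two steps are wrong as written, though both are easy to repair.

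\textbf{The 2d-heap is misdescribed.} Suppose the children of $v$ were its right spine $v.\mathrm{right}, v.\mathrm{right}.\mathrm{right},\dots$. Then no node that is a left child in $C$ would get a parent: for $A=[3,1,2]$, position $2$ is the left child of position $3$ and is orphaned. Meanwhile, deeper right-spine nodes would get several parents. With parent $=$ previous larger value, the children of $v$ are the left spine of its right subtree, $v.\mathrm{right}, v.\mathrm{right}.\mathrm{left}, v.\mathrm{right}.\mathrm{left}.\mathrm{left},\dots$, and the virtual root gets the left spine of $C$. This is what makes preorder coincide with array order, so that $\mathrm{select}$ on $B$ locates node $i$.

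\textbf{Your fallback omits the translation between array positions and tour positions.} A plain BP of $C$ cannot supply it. It forgets which child is left and which is right, and its $\mathrm{select}$ gives preorder, whereas array positions are in-order ranks. The repair is to add a bit vector over the Euler tour of $C$ marking the in-order visit of every node. Then:
\begin{enumerate}[(i)]
\item $\mathrm{select}$ maps $i\le j$ to tour positions;
\item the minimum depth between them is attained uniquely at the in-order visit of $\mathrm{LCA}(i,j)$, because the tour between these two visits stays inside the LCA's subtree and passes through the LCA only at that visit;
\item $\mathrm{rank}$ maps that position back to an array position.
\end{enumerate}
This costs $\Oh(n)$ extra bits. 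With it, the fallback is a complete proof of the $\Oh(n)$-bit statement, which is all the paper uses. It also makes the unverified off-by-one correction in your DFUDS route unnecessary.
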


In our encodings, we use representations of sets supporting the following operations. Given a set $S \subseteq [1 \dd U]$,

\begin{itemize}
\item $\rank_S(x)$ returns the number of elements of $S$ that are not greater than $x$, and
\item $\select_S(i)$ returns the $i$-th smallest element in $S$.
\end{itemize}

We will only query rank on elements $x \in S$ (i.e., only partial-rank support is required), which allows for an encoding with a smaller redundancy term
compared to the general case.

\begin{lemma}[\cite{RamanRR07}]\label{lem:sparse}
	Let $S\subseteq[1 \dd U]$ with $|S|=k$.
	There is an encoding using $\Oh(k + k\log(U/k))$ bits supporting rank (only for members of $S$), and select queries in $\Oh(1)$ time.
\end{lemma}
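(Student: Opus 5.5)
The plan is the standard two-level indexable dictionary construction. Set $k=|S|$ and assume $k\ge 1$ and $U\ge k$. Choose $\ell=\max\{0,\lfloor\log(U/k)\rfloor\}$ and write each element $x\in S$ as a high part $h(x)=\lfloor (x-1)/2^\ell\rfloor$ and a low part $x-1-h(x)2^\ell$, which lies in $[0\dd 2^\ell-1]$.

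The low parts are stored verbatim in a packed array $L$, in increasing order of the elements. This costs $k\ell=\Oh(k\log(U/k))$ bits.

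The high parts are nondecreasing and bounded by $U/2^\ell\le 2k$. We encode them in unary as a bitvector $H$: the $i$-th element contributes a $1$ at position $h(x_i)+i$, and all remaining positions are $0$. The length of $H$ is therefore $k+\lfloor (U-1)/2^\ell\rfloor+1=\Oh(k)$.

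On top of $H$ we build constant-time select structures for both $1$s and $0$s, using $o(|H|)=o(k)$ extra bits. We use the classical two-level directory scheme of Clark and Munro: we sample every $(\log k)^2$-th one, and within each block either store all positions explicitly (sparse blocks) or recurse one level further down to blocks handled by a lookup table. All of this fits in $\Oh(k)$ bits overall.

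Select is then computed as
\[
\select_S(i)=\bigl(\select_1(H,i)-i\bigr)2^\ell+L[i]+1 .
\]
This takes $\Oh(1)$ time with one select on $H$ and one array access.

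Partial rank for $x\in S$ proceeds as follows. Compute $h=h(x)$. The ones belonging to bucket $h$ lie between $p=\select_0(H,h)+1$ (or position $1$ when $h=0$) and $\select_0(H,h+1)-1$, and they correspond to indices $p-h$ onward. Within this run, the low parts are sorted, and $x$ is present among them. The difficulty is finding $x$ in $\Oh(1)$ time when the bucket is large, because a plain scan or binary search would not be constant.

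This step is the main obstacle, and it is exactly where restricting rank to members helps. I would first try to avoid searching altogether by storing, for each element, enough information to recover its rank from the position in $H$ alone. Concretely, I would refine the bucketing so that every bucket has $\Oh(1)$ expected size, but this needs a worst-case guarantee.

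The route I would commit to is the following. Split any bucket of size exceeding $\log k$ by treating its elements as a recursive instance over the universe $[0\dd 2^\ell)$, still with the same per-element budget of $\Oh(1+\log(U/k))$ bits. Alternatively, and more simply, follow Raman–Raman–Rao directly. Store $H$ together with a constant-time $\rank_1$ on $H$, using $o(k)$ bits via the standard $\lceil\log k\rceil^2$-superblock and $\frac12\log k$-block tables. Then, for a member $x$, use a second sparse bitvector marking positions in blocks of $\log k$ consecutive elements, and within a block compare the packed low parts by word-parallel comparison. Each block holds $\log k$ fields of $\ell$ bits; if $\ell\log k$ exceeds the word size, split the block into $\Oh(1)$ words at a time, using a lookup table of size $o(k)$ on $\frac12\log k$-bit chunks.

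The total space is $k\ell+\Oh(k)+o(k)=\Oh(k+k\log(U/k))$, as claimed.
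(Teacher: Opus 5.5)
\textbf{There is a genuine gap: your partial-rank step does not run in constant time.} The paper does not prove this lemma itself; it imports it from Raman, Raman and Rao, whose indexable dictionary is hashing-based. The additive $\Oh(\log\log U)$ term in their bound pays for the hash function.

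Your select half is correct. The Elias--Fano split, the unary bitvector $H$ of length $\Oh(k)$ with Clark-style select, and $\select_S(i)=(\select_1(H,i)-i)2^\ell+L[i]+1$ give the claimed space and $\Oh(1)$ time.

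Neither of your two routes for partial rank yields constant time:
\begin{itemize}
\item \textbf{Recursive re-splitting.} It costs $\Oh(1)$ per level, but the depth is unbounded. A bucket with about $2^{\ell/2}$ elements can contain a sub-bucket with about $2^{\ell/4}$ elements, and so on. The recursion can therefore run for $\Theta(\log(\ell/\log\log k))$ levels.
\item \textbf{Blocks of $\log k$ consecutive elements.} This presupposes that, given $x$, you can find the block containing it. Inside a large bucket, that means locating the low part of $x$ among up to $2^\ell/\log k$ block heads. Since $x$ is generally not a head, this is a genuine predecessor query, for which superconstant lower bounds hold in near-linear space. A bitvector over the universe would cost $U$ bits, and a sparse representation of the heads just moves the same problem one level up.
\item \textbf{Within one block.} The $\log k$ fields of $\ell$ bits can span $\Theta(\log k)$ words, so the word-parallel comparison is not constant-time either.
\end{itemize}
The underlying problem is that your scheme never uses the promise $x\in S$, so it is in effect attempting predecessor search.

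The missing idea is to exploit membership by hashing. Fix a minimal perfect hash function $\phi\colon S\to[1\dd k]$ that is evaluated in $\Oh(1)$ time and described in $\Oh(k+\log\log U)$ bits, e.g.\ the one of Hagerup and Tholey. Since this is an encoding, only its existence matters. Store an array $R$ with $R[\phi(x)]$ equal to the rank of $x$ within its high-part bucket. The elements of one bucket have distinct low parts in $[0\dd 2^\ell-1]$, so each entry fits in $\ell+1$ bits, and $R$ takes $\Oh(k+k\log(U/k))$ bits. Then, for $h=h(x)\ge 1$, $\rank_S(x)=\select_0(H,h)-h+R[\phi(x)]$, and for $h=0$, $\rank_S(x)=R[\phi(x)]$; both take $\Oh(1)$ time. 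The additive $\log\log U$ is absorbed: if $k\le\sqrt U$ then $k\log(U/k)\ge\frac12\log U$, and otherwise $\log\log U\le 1+\log\log k$.
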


We also use the Elias--Fano encoding for representing monotonic sequences:

\begin{lemma}[\cite{FerraginaL25}]\label{lem:mono}
	Let $S[1 \dd k]$ be a non-decreasing sequence with values in $[1 \dd U]$, with $k \le U$.
	There is an encoding using $\Oh(k + k\log(U/k))$ bits supporting access to any $S[t]$ in $\Oh(1)$ time.
	The non-increasing case is symmetric.
\end{lemma}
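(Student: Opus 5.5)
The plan is the classical Elias--Fano split of each value into high and low bits. The high bits go into a unary-coded bitvector that supports $\select$. The low bits go into a plain array of fixed-width fields. First shift the values to $S'[t] \coloneqq S[t]-1 \in [0 \dd U-1]$, which is still non-decreasing, and set $\ell \coloneqq \max\{0, \floor{\log(U/k)}\}$. For each $t$ write $S'[t] = h_t 2^\ell + r_t$ with $0 \le r_t < 2^\ell$. Then $h_t = \floor{S'[t]/2^\ell}$ is non-decreasing in $t$ and lies in $[0 \dd \floor{(U-1)/2^\ell}]$. Since $2^\ell > U/(2k)$, we get $h_t < 2k$.

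\emph{Low parts.} Store $r_1,\dots,r_k$ consecutively in an array of $k$ fields of $\ell$ bits each, which takes $k\ell = \Oh(k\log(U/k))$ bits. Because $\ell \le \log U \le \log n$ fits in a machine word, any $r_t$ can be extracted in $\Oh(1)$ time with shifts and masks (reading at most two adjacent words).

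\emph{High parts.} Encode the non-decreasing sequence $h_1,\dots,h_k$ as a bitvector $B$. Scanning $t=1,\dots,k$, we emit $h_t-h_{t-1}$ zeros followed by a single one, with $h_0 \coloneqq 0$. The $t$-th one of $B$ then sits at position $p_t = t + h_t$, so $B$ has length less than $k+2k = 3k$. Instead of building a dedicated $\select$ structure, take the set $P = \{p_1,\dots,p_k\} \subseteq [1 \dd 3k]$ and apply \Cref{lem:sparse} to it. This uses $\Oh(k + k\log 3) = \Oh(k)$ bits and gives $\select_P(t) = p_t$ in $\Oh(1)$ time.

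\emph{Queries and space.} To answer an access to $S[t]$, compute $h_t = \select_P(t) - t$, read $r_t$, and return $h_t 2^\ell + r_t + 1$, all in $\Oh(1)$ time. The total space is $\Oh(k) + \Oh(k\log(U/k))$ bits, plus $\Oh(\log U)$ bits for the parameters $k$, $U$, $\ell$, which is absorbed by the bound. For the non-increasing case, apply the construction to $U+1-S[t]$, which is non-decreasing with values in $[1 \dd U]$, and invert on access.

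The only delicate point, and the expected main obstacle, is fixing $\ell$ so that both parts stay within budget at once. The high part needs $\max_t h_t = \Oh(k)$ so that $B$ has $\Oh(k)$ positions. The low part needs $\ell = \Oh(\log(U/k))$, with the degenerate case $U < 2k$ handled by $\ell = 0$. This is exactly what the choice $2^\ell \approx U/k$ achieves. A second point is that \Cref{lem:sparse} already supplies constant-time $\select$ in $\Oh(k)$ bits for a set of density $\Theta(1)$, so no separate $\select$ structure over $B$ is needed.
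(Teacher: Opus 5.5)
Your argument is correct and is the Elias--Fano construction itself, which the paper only cites rather than proves; the one variation, delegating $\select$ on the high part to \Cref{lem:sparse} over the dense universe $[1\dd 3k]$ instead of using a dedicated select index, is legitimate. Since you already rely on \Cref{lem:sparse}, the high/low split is in fact unnecessary: storing the strictly increasing set $\set{S[t]+t : t\in[1\dd k]}\subseteq[1\dd U+k]$ directly and returning $\select(t)-t$ already costs $\Oh(k+k\log((U+k)/k))=\Oh(k+k\log(U/k))$ bits, using $k\le U$.
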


\Cref{lem:sparse} and \Cref{lem:mono} can be further refined to optimise the leading constants, but this introduces clutter and does not improve our final result.

Throughout the paper, the space of a structure is almost always obtained by summing entropy-like bounds of the form $k_i \log(U_i/k_i)$ coming from the two encodings above.
To handle such sums uniformly, we will rely on the following inequality.

\begin{lemma}[log-sum inequality]\label{lem:logsum}
	For any positive numbers $k_1,\dots,k_t$ and $U_1,\dots,U_t$, with total sums $K \coloneqq \sum_{i=1}^t k_i$ and $U \coloneqq \sum_{i=1}^t U_i$,
	\[
		\sum_{i=1}^t k_i \log \frac{U_i}{k_i}
		\le K \log \frac{U}{K}.
	\]
\end{lemma}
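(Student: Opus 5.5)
The plan is to derive the log-sum inequality from the concavity of the logarithm, using Jensen's inequality with the weights $k_i/K$. These weights are positive and sum to $1$, so they form a probability distribution on $\{1,\dots,t\}$. First I rewrite the left-hand side, after dividing by $K$, as a convex combination:
\[
	\frac{1}{K}\sum_{i=1}^t k_i \log \frac{U_i}{k_i} = \sum_{i=1}^t \frac{k_i}{K}\log\frac{U_i}{k_i}.
\]

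Next I apply Jensen's inequality to the concave function $\log$ (in any base greater than $1$), evaluated at the positive points $x_i \coloneqq U_i/k_i$. This gives
\[
	\sum_{i=1}^t \frac{k_i}{K}\log x_i \le \log\Bigl(\sum_{i=1}^t \frac{k_i}{K}\cdot\frac{U_i}{k_i}\Bigr) = \log\frac{U}{K}.
\]
The key point is that the factors $k_i$ cancel inside the argument, leaving $\sum_i U_i/K = U/K$. Multiplying both sides by $K>0$ yields the claimed bound $\sum_i k_i\log(U_i/k_i)\le K\log(U/K)$.

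There is no real obstacle here. The only thing to check is that the hypotheses of Jensen's inequality hold: every $x_i$ is positive, so $\log$ is defined and concave on a domain containing all $x_i$, and the weights are nonnegative and sum to one. If one prefers to avoid Jensen, an equivalent route is the tangent-line bound $\log y \le (y-1)\log e$ applied to $y_i = \frac{U_i K}{k_i U}$. Multiplying by $k_i$ and summing, the right-hand side becomes $\bigl(\sum_i U_i K/U - K\bigr)\log e = 0$. This gives $\sum_i k_i\log\frac{U_i}{k_i} - K\log\frac{U}{K}\le 0$, which is the same inequality.
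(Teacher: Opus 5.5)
Your proof is correct and is essentially the paper's argument: Jensen's inequality for the concave logarithm with weights $k_i/K$ at the points $U_i/k_i$ (the paper writes these as $\frac{U_i/K}{\alpha_i}$, which is the same quantity), followed by multiplying through by $K$. The tangent-line variant you sketch is also valid, but it is not needed.
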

\begin{proof}
	For every $i$, let $\alpha_i \coloneqq k_i/K$, so that $\sum_{i=1}^t \alpha_i = 1$.
	Then
	\[
		\sum_{i=1}^t k_i \log \frac{U_i}{k_i}
		= K \sum_{i=1}^t \alpha_i \log \frac{U_i/K}{\alpha_i}.
	\]
	Since $\sum_{i=1}^t \alpha_i = 1$ and the logarithm is concave, Jensen's inequality yields
	\[
		\sum_{i=1}^t \alpha_i \log \frac{U_i/K}{\alpha_i}
		\le \log\!\left(\sum_{i=1}^t \alpha_i \cdot \frac{U_i/K}{\alpha_i}\right)
		= \log\!\left(\sum_{i=1}^t \frac{U_i}{K}\right)
		= \log\frac{U}{K}.
	\]
	Multiplying both sides by $K$ gives the claim.
\end{proof}

\section{Reducing a Query to Two Candidates}\label{sec:reduction}
The first step of the construction reduces the query rectangle to comparing two points.
Since storing enough information to compare arbitrary pairs is far too expensive, the reduction must produce points with some additional structure.

\begin{definition}[visibility]\label{def:visibility}
Let $p=(i,j)$ be a point and let $j'\in[1 \dd n]$ be a column.
We say that $p$ is \emph{visible from column $j'$} if $p = \max_w(\{i\}\times[\min\{j,j'\}\dd \max\{j,j'\}])$.
Equivalently, along row $i$, the point $p$ remains the heaviest when scanning from column $j$ toward column~$j'$.
\end{definition}


\begin{lemma}\label{lem:two-cand}
	There is a data structure using $\Oh(mn\log m)$ bits that, given a query rectangle $R$, returns in $\Oh(1)$ time two (not necessarily distinct) points $p,q\in R$ such that $\max_\weight(R)=\max_\weight(\{p,q\})$, $p$ is visible from $\col(q)$, and $q$ is visible from $\col(p)$.
\end{lemma}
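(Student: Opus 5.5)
The plan is to split the query at a canonical column and handle the two halves separately. Let $R=[i_1\dd i_2]\times[j_1\dd j_2]$. Build a balanced binary tree over the columns $[1\dd n]$, and for each $j_1\le j_2$ let $c$ be the split column of the lowest node whose range contains $[j_1\dd j_2]$. Then $R$ is the union of a left part $R_L=[i_1\dd i_2]\times[j_1\dd c]$ and a right part $R_R=[i_1\dd i_2]\times[c+1\dd j_2]$ (if one part is empty we set $p=q$). The node, and hence $c$, is found in $\Oh(1)$ time from the most significant differing bit of $j_1-1$ and $j_2-1$.

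We take $p=\max_\weight(R_L)$ and $q=\max_\weight(R_R)$, so $\max_\weight(R)=\max_\weight(\{p,q\})$ holds trivially. Every point of $R_L$ that is a row-maximum of its row restricted to $[j_1\dd c]$ is visible from $c$. Symmetrically, the candidates in $R_R$ are visible from $c+1$.

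The first ingredient is to answer the one-sided queries ``heaviest point in $[i_1\dd i_2]\times[j_1\dd c]$'' for every node $c$. For each row we store the 1D structure of \cref{lem:rmq1d}, which takes $\Oh(mn)$ bits in total. These give the row maxima $L_i$ (on $[j_1\dd c]$) and $R_i$ (on $[c+1\dd j_2]$), so it remains to choose the best row.

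A point $(i,j)$ can be a suffix maximum toward many ancestors $c$, so storing, per node, a rank over the $m$ rows for every point visible from $c$ would cost a factor of the tree depth. To avoid this, I would store for each point only its relative order, using $\Oh(\log m)$ bits, among the $\le m$ points of its column that are visible from the same split column. I would then argue, via the monotonicity of suffix maxima along a row, that the whole set of candidate values over rows $[i_1\dd i_2]$ is then comparable using a per-column 1D RMQ (\cref{lem:rmq1d}) over these ranks. The total is $\Oh(mn\log m)$ bits, summed with \cref{lem:logsum}-type bounds if the ranks are stored with \cref{lem:sparse}.

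The second ingredient fixes the visibility condition, and I expect this to be the main obstacle. With $p$ in row $i$ and $q$ in row $i'$, the point $p$ is maximal on $[\col(p)\dd c]$ but not necessarily on $[c\dd\col(q)]$. In that case, however, the row maximum $R_i$ of row $i$ on $[c+1\dd\col(q)]$ is heavier than $p$, while $q$ beats every point of $R_R$, so $w(q)>w(p)$ and the answer is $q$. My plan is therefore to post-process as follows:
\begin{enumerate}
\item replace $q$ by the heavier of $q$ and the row-$i$ maximum on $[c+1\dd\col(q)]$;
\item symmetrically, replace $p$ by the heavier of $p$ and the row-$i'$ maximum on $[\col(p)\dd c]$;
\item repeat this a constant number of times until both visibility conditions hold.
\end{enumerate}
Each step is a 1D row query and preserves $\max_\weight(\{p,q\})=\max_\weight(R)$. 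However, comparing weights across rows is exactly the information we cannot afford, so the delicate part is showing that the replacements can be chosen blindly: each replacement should strictly improve visibility without ever removing the true maximum, and the process should stabilise after $\Oh(1)$ rounds.
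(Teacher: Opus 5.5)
Your first ingredient has a genuine gap. Computing $p=\max_\weight([i_1\dd i_2]\times[j_1\dd c])$ is itself a two-dimensional RMQ. The row maxima $L_{i_1},\dots,L_{i_2}$ on $[j_1\dd c]$ sit in different columns, and those columns depend on $j_1$. So ranks among the points of a single column, and a per-column 1D RMQ built over them, say nothing about how $L_i$ and $L_{i'}$ compare.

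Even within one column, a single rank per point cannot serve all queries. As you note yourself, a point can be visible toward the split columns of many ancestors, and its rank among the points of its column visible from $c$ depends on $c$. So one $\Oh(\log m)$-bit rank per point does not suffice.

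The natural way to make this ingredient work is to store, for each split column $c$ and each left boundary $j_1$, a 1D RMQ over the $m$ values $L_1,\dots,L_m$. That costs $\Oh(m)$ bits per pair, hence $\Oh(mn)$ per level of the column tree and $\Oh(mn\log n)$ overall. This is exactly the $\log n$ factor the lemma has to avoid. Ordering points across different rows and columns is precisely the information this reduction is meant to postpone; the remainder of the paper is devoted to comparing just two mutually visible points. Asserting it in $\Oh(mn\log m)$ bits with $\Oh(1)$ time assumes away the main difficulty.

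Your second ingredient, by contrast, is easy, and your iterative replacement is vacuous. The row-$\row(p)$ maximum on $[c+1\dd\col(q)]$ lies in $R_R$, so it is never heavier than $q$, and step~1 never changes anything. What your own observation actually shows is this: if $p$ is not visible from $\col(q)$, then $\weight(q)>\weight(p)$, and symmetrically for $q$. Hence at most one condition fails. Testing each with a row RMQ and returning $(q,q)$ or $(p,p)$ settles visibility in $\Oh(1)$ time.

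The missing idea is to split along rows rather than columns.
\begin{itemize}
\item For every $k\ge 0$ and starting row $r$, the paper stores a 1D RMQ of \Cref{lem:rmq1d} over the length-$n$ array of column maxima of the block $[r\dd r+2^k-1]\times[1\dd n]$. This gives $\Oh(m\log m)$ blocks of $\Oh(n)$ bits each.
\item $R$ is covered by two overlapping power-of-two row blocks $R_{\mathrm{top}},R_{\mathrm{bot}}$ restricted to $[j_1\dd j_2]$. Each maximum is found by one block RMQ followed by one column RMQ.
\item Each of these two rectangles contains the entire segment $[j_1\dd j_2]$ of every row it meets. Mutual visibility therefore follows directly from maximality, with no post-processing.
\end{itemize}
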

\begin{proof}
	Build a 1D RMQ structure of \Cref{lem:rmq1d} for every row and column.
	In addition, for every $k \ge 0$ and every starting row $r$, consider the row block
	$[r \dd r+2^k-1]\times[1 \dd n]$.
	For each such block, define an auxiliary array of length $n$ whose $j$-th entry corresponds to the heaviest point in column $j$ restricted to the block, and store its 1D RMQ structure of \Cref{lem:rmq1d}.
	Over all blocks, the total space is $\Oh(mn\log m)$ bits.
	
	Suppose $R = [i_1 \dd i_2]\times[j_1 \dd j_2]$ and let $h \coloneqq i_2-i_1+1$.
     If $h$ is a power of $2$, we can answer the query in $\Oh(1)$ time using the 1D RMQ   structure for the row block $[i_1 \dd i_2]\times[1 \dd n]$.
	Otherwise, let 
		$R_{\mathrm{top}} \coloneqq [i_1 \dd i_1+2^{\floor{\log h}}-1]\times[j_1 \dd j_2]$
and
		$R_{\mathrm{bot}} \coloneqq [i_2-2^{\floor{\log h}}+1 \dd i_2]\times[j_1 \dd j_2]$
	be two overlapping rectangles. Let $p \coloneqq \max_\weight(R_{\mathrm{top}})$ and $q \coloneqq \max_\weight(R_{\mathrm{bot}})$.
	Since the two blocks cover $R$, the maximum of $R$ is the heavier of $p$ and $q$.
	
	To prove mutual visibility, consider $p=(i,j)$ and the column $j'=\col(q)$.
	If $p$ was not visible from $j'$, then some point $(i,t)$ with $t$ between $j$ and $j'$ would satisfy $\weight(i,t)>\weight(p)$.
	Since both $j$ and $j'$ lie in $[j_1 \dd j_2]$, the point $(i,t)$ would belong to $R_{\mathrm{top}}$, contradicting the maximality of $p$ there.
	The proof for $q$ is symmetric.
\end{proof}
By applying the above preprocessing, the RMQ problem is reduced to comparing the two mutually visible points $p$ and $q$, called \emph{candidates}, and returning the heavier one.

In principle, the ability to compare arbitrary pairs of points would reveal the total order of all $mn$ entries and thus requires $\Omega(mn \log (mn))$ bits of space.
The visibility condition provides the additional structure that makes the comparison problem compressible.
The rest of the paper is therefore concerned only with comparing mutually visible pairs of
points.

\section{Active Points and Origins}\label{sec:origins}
We now introduce the main concepts of the data structure, and formulate exactly which pairs of points are going to be comparable by it.
We then show that the pair of candidates produced by \Cref{lem:two-cand} satisfies the requirement.

\paragraph{Tree on columns.}
Let $\Tcol$ be a complete binary tree over the columns $[1 \dd n]$.
Each leaf of $\Tcol$ corresponds to a column, with the leaves ordered from left to right.
Every node $u \in \Tcol$ corresponds to an interval, denoted as
$\interval{u} \subseteq [1 \dd n]$, consisting of the columns of the leaves in the subtree of $u$.
In particular, the interval of the root is the whole $[1 \dd n]$, and the intervals of the two children of an internal node partition the interval of their parent into two parts of almost equal length.
The depth of $u$ in $\Tcol$ is denoted as $\depth(u)$, where the root has depth $1$.

\begin{definition}[active points]\label{def:active}
	For an internal node $u\in\Tcol$ with children $u_1$ and $u_2$, whose intervals are
	$\interval{u_1}=[a_1\dd b_1]$ and $\interval{u_2}=[a_2\dd b_2]$,
	we define
	$\partial u \coloneqq \set{a_1, b_1, a_2, b_2}$.
	Note that some of these endpoints may coincide.
	For a leaf $u\in\Tcol$, we define $\partial u \coloneqq \{c\}$, where $c$ is the unique column in $\interval{u}$.
	
	We define the \emph{active set} of $u$, denoted $\act{u}$, as the set of all points in $[1\dd m]\times \interval{u}$ that are visible from at least one column in $\partial u$.
	A point $p$ is said to be \emph{active in} $u$ if $p \in \act{u}$.
\end{definition}
	
The notion of active points is useful because `activeness' behaves monotonically down the tree: once a point is active in some node, it remains active on the path toward its column~leaf.

\begin{lemma}\label{lem:active-path}
	For every point $p$, the set of nodes $u$ such that $p \in \act{u}$ forms a downward path in $\Tcol$ ending at the leaf corresponding to column $\col(p)$.
\end{lemma}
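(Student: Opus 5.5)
Fix a point $p=(i,j)$. Any node $u$ with $p\in\act{u}$ must have $j\in\interval{u}$, because $\act{u}\subseteq[1\dd m]\times\interval{u}$. All such nodes lie on the root-to-leaf path $P$ ending at the leaf $\ell$ of column $j$. The plan is therefore to prove two facts. First, $p\in\act{\ell}$. Second, if $u\in P$ is internal with $p\in\act{u}$, and $v$ is the child of $u$ on $P$ (so $j\in\interval{v}$), then $p\in\act{v}$. Together these show that the active nodes form a downward-closed segment of $P$ containing $\ell$. Such a segment is a path from some node $u^\ast$ down to $\ell$, which is exactly the statement.

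The base case is immediate. We have $\partial\ell=\{j\}$, and $p$ is visible from its own column $j$, since the scanned interval $[j\dd j]$ contains only $p$.

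For the inductive step, the key tool is a monotonicity observation about visibility. If $p$ is visible from column $j'$, then $p$ is visible from every column $j''$ lying between $j$ and $j'$. This holds because the interval $[\min\{j,j''\}\dd\max\{j,j''\}]$ is contained in $[\min\{j,j'\}\dd\max\{j,j'\}]$, and $p$ is already the maximum over the larger interval.

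Now suppose $p$ is visible from some $c\in\partial u$, and let $v$ have interval $[a\dd b]$, so $a,b\in\partial u$ and $a\le j\le b$. There are three cases for the position of $c$.
\begin{itemize}
\item If $c\in[a\dd b]$, then $c$ equals $a$ or $b$, since every element of $\partial u$ is an endpoint of one of the two child intervals and these intervals partition $\interval{u}$. Because $a,b\in\partial v$ only when $v$ is internal, this case needs the extra argument given below.
\item If $c<a$, then $a$ lies between $c$ and $j$, so $p$ is visible from $a$ by monotonicity.
\item If $c>b$, then symmetrically $p$ is visible from $b$.
\end{itemize}
In every case $p$ is visible from $a$ or from $b$.

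The remaining step is to pass from visibility from an endpoint of $v$ to membership in $\act{v}$. If $v$ is internal, its children's intervals are $[a\dd x]$ and $[x+1\dd b]$, so $a,b\in\partial v$ and we are done. If $v$ is a leaf, then $a=b=j$, which is $\partial v$ itself, and the base case applies.

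The only delicate point is this bookkeeping: checking that the endpoints $a,b$ of $\interval{v}$ always belong to $\partial v$, and handling the degenerate case where some endpoints coincide. Since the outermost endpoints of the two children's intervals are $a$ and $b$, this check is routine.
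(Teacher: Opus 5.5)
Your proof is correct and follows essentially the same route as the paper: restrict to the root-to-leaf path of column $\col(p)$, note that $p$ is active in the leaf, and prove downward closure by moving the witness column $c\in\partial u$ to the nearest endpoint of $\interval{v}$, using monotonicity of visibility along the row. Your three-way case split on where $c$ lies relative to $\interval{v}$, and your check that both endpoints of $\interval{v}$ belong to $\partial v$ (including when $v$ is a leaf), just make explicit what the paper's ``closest endpoint'' step leaves implicit.
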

\begin{proof}
	Suppose that $p=(i,j)$.
	If $p\in \act{u}$, then by definition $j\in \interval{u}$.
	Hence every node $u$ such that $p\in \act{u}$ lies on the root-to-leaf path ending at the leaf corresponding to column~$j$.

	It therefore suffices to show that this set of nodes is downward closed.
	Let $u$ be an internal node such that $p\in \act{u}$, and let $v$ be the child of $u$ whose interval contains $j$.
	Choose a witness column $c\in \partial u$ such that $p$ is visible from $c$.

	If $c\in \partial v$, then the same witness shows that $p\in \act{v}$.
	Otherwise, let $c'\in \partial v$ be the endpoint of $\interval{v}$ closest to $c$.
	Then the interval between $j$ and $c'$ is contained in the interval between $j$ and $c$.
	Since $p$ is visible from $c$, it is also visible from $c'$, and hence $p\in \act{v}$.

	It follows that the active set of $p$ is a downward-closed subset of this root-to-leaf path.
	Since $p$ is always active in the leaf for column $j$ (it is trivially visible from its own column), the set is nonempty, and therefore forms a path ending at the leaf corresponding to column~$j$.
\end{proof}

\begin{definition}[origin]\label{def:origin}
	For a point $p$, we define the \emph{origin} of $p$, denoted $\orig(p)$, as the topmost node of the path given by \Cref{lem:active-path}.
\end{definition}

\begin{definition}[co-active pair]\label{def:coactive}
	A pair of points $p, q$ is \emph{co-active} if there exists a node $u \in \Tcol$ such that $p, q \in \act{u}$.
\end{definition}

The next lemma gives a more algorithmic characterization of co-activity.
Although the definition only requires the existence of some witness node in which both points are active, one can always pick a canonical one: the deeper of the two origins.

\begin{lemma}\label{lem:coactive-char}
	Let $p$ and $q$ be points such that
	$\depth(\orig(p)) \le \depth(\orig(q))$.
	Then $p$ and $q$ are co-active if and only if $p$ is active in $\orig(q)$.
\end{lemma}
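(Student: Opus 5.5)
The plan is to prove both directions directly from \Cref{lem:active-path}, which says that for each point $x$ the set of nodes in which $x$ is active is exactly the path from $\orig(x)$ down to the leaf of column $\col(x)$.

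The ``if'' direction is immediate. If $p \in \act{\orig(q)}$, then, since $q \in \act{\orig(q)}$ by the definition of origin, the node $u = \orig(q)$ witnesses that $p$ and $q$ are co-active.

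For the ``only if'' direction, suppose some node $u$ has $p, q \in \act{u}$. By \Cref{lem:active-path} applied to $q$, the node $u$ lies on the path from $\orig(q)$ to the leaf of $\col(q)$, so $\orig(q)$ is an ancestor of $u$ (or $u$ itself). Likewise, $\orig(p)$ is an ancestor of $u$ or $u$ itself. Both origins therefore lie on the root-to-$u$ path, and the assumption $\depth(\orig(p)) \le \depth(\orig(q))$ gives that $\orig(p)$ is an ancestor of, or equal to, $\orig(q)$. Hence $\orig(q)$ lies on the root-to-$u$ segment between $\orig(p)$ and $u$.

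It remains to see that $\orig(q)$ lies on the active path of $p$. That path runs from $\orig(p)$ down to the leaf of $\col(p)$ and passes through $u$, so it contains every node between $\orig(p)$ and $u$. In particular it contains $\orig(q)$, so $p \in \act{\orig(q)}$.

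The only point needing care is that the active path of $p$ is a contiguous downward segment containing both $\orig(p)$ and $u$, so that it includes every intermediate node. This is exactly the contiguity and downward-closedness established in \Cref{lem:active-path}. I do not expect a genuine obstacle here, since the lemma is essentially a corollary of that path structure.
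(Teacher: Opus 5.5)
Your proposal is correct and follows the paper's own argument. The ``if'' direction uses $\orig(q)$ as the witness node. The ``only if'' direction uses \Cref{lem:active-path} to place both origins on the root-to-$u$ path, so the depth assumption puts $\orig(q)$ on $p$'s active path.
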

\begin{proof}
	If $p$ is active in $\orig(q)$, then both $p$ and $q$ belong to $\act{\orig(q)}$, so the pair is co-active.
	Conversely, suppose that $p$ and $q$ are co-active.
	Then there exists a node $u\in\Tcol$ such that $p,q\in \act{u}$.
	By \Cref{lem:active-path}, both $\orig(p)$ and $\orig(q)$ are ancestors of $u$.
	Since
	$\depth(\orig(p)) \le \depth(\orig(q))$,
	it follows that $\orig(q)$ is a descendant of $\orig(p)$.
	Hence $\orig(q)$ lies on the active path of $p$, and therefore $p$ is active in $\orig(q)$.
\end{proof}

It can be directly verified that the pairs produced by the reduction of \Cref{lem:two-cand} are co-active.

\begin{lemma}\label{lem:mv-implies-coactive}
	If $p$ is visible from column $\col(q)$ and $q$ is visible from column
	$\col(p)$, then $p$ and $q$ are co-active.
\end{lemma}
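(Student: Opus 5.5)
We exhibit an explicit witness node: the lowest common ancestor $u$ of the leaves for columns $j\coloneqq\col(p)$ and $j'\coloneqq\col(q)$ in $\Tcol$. We then check directly from \Cref{def:active} that both $p$ and $q$ lie in $\act{u}$. No appeal to \Cref{lem:coactive-char} is needed; that lemma only becomes relevant later, to locate the witness algorithmically.

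First, the degenerate case $j=j'$. Here $u$ is the leaf of column $j$, with $\partial u=\{j\}$. Every point is visible from its own column, so $p,q\in\act{u}$. (In this case mutual visibility is not even needed.)

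Now assume $j\neq j'$, say $j<j'$ by symmetry. Then $u$ is an internal node with children $u_1,u_2$, where $\interval{u_1}=[a_1\dd b_1]$ and $\interval{u_2}=[a_2\dd b_2]$. By the choice of $u$ as the lowest common ancestor, $j\in\interval{u_1}$ and $j'\in\interval{u_2}$, so $j\le b_1<a_2\le j'$. In particular $p,q\in[1\dd m]\times\interval{u}$.

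It remains to find witness columns in $\partial u$.
\begin{itemize}
\item For $p$, take $b_1\in\partial u$. Since $p$ is visible from $j'$, it is the maximum of row $\row(p)$ over $[j\dd j']$. Because $[j\dd b_1]\subseteq[j\dd j']$, it is also the maximum over $[j\dd b_1]$, so $p$ is visible from $b_1$.
\item For $q$, take $a_2$. By the symmetric argument, $[a_2\dd j']\subseteq[j\dd j']$ and $q$ is visible from $j$, so $q$ is visible from $a_2$.
\end{itemize}
Hence $p,q\in\act{u}$, and the pair is co-active.

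The only point needing care is the monotonicity of visibility under shrinking the scanned interval. This is the same observation already used in the proof of \Cref{lem:active-path}, so I expect no real obstacle.
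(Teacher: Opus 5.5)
Your proof is correct and follows the paper's own argument: take the lowest common ancestor $u$ of the two column leaves (or the common leaf when $\col(p)=\col(q)$), and observe that a boundary column of $\partial u$ lies between $\col(p)$ and $\col(q)$, so each point is visible from it. The only difference is that you name the specific witnesses $b_1$ and $a_2$, whereas the paper uses a single unnamed $c\in\partial u$ for both points.
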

\begin{proof}
	If $\col(p)=\col(q)$, then both $p$ and $q$ are active in the leaf of $\Tcol$ corresponding to this column, so the pair is co-active.
	Otherwise, let $u$ be the lowest common ancestor in $\Tcol$ of the leaves corresponding to columns $\col(p)$ and $\col(q)$.
	In that case $u$ is an internal node, and one of the columns $c \in \partial u$ lies between $\col(p)$ and $\col(q)$.
	Since $p$ is visible from $\col(q)$, it is also visible from every column between $\col(p)$ and $\col(q)$, and in particular from $c$.
	Hence $p\in \act{u}$.
	By the same argument, $q\in \act{u}$.
	Therefore the pair is co-active.
\end{proof}

In the remainder of the paper, we focus on encodings capable of efficiently comparing precisely the co-active pairs of points.
Combined with the preprocessing of \Cref{lem:two-cand}, which reduces every RMQ query to such a comparison by \Cref{lem:mv-implies-coactive}, this yields the main result.

\section{Comparing Co-Active Pairs}\label{sec:compare}

From this point onward, we consider the following comparison problem:
given two co-active points $p_0$ and $q_0$, determine which of them is heavier.
We work with its decision version, fixing the hypothesis $\mathcal{H} : \weight(p_0) \le \weight(q_0)$, so that the answer is $q_0$ if $\mathcal{H}$ holds, and $p_0$ otherwise.

At a high level, the query algorithm maintains a co-active pair of points, initially $(p_0, q_0)$, along with the invariant that the hypothesis `the first point is no heavier than the second' is equivalent to $\mathcal{H}$.
At each step, one of the two current points gets replaced by a substitute whose origin in $\Tcol$ is strictly closer to the root, in such a way that the invariant is preserved.
Note that this strategy does not preserve the identity of the heavier point of the pair.
The process terminates once a single local comparison can resolve the query.

\subsection{Local Primitives}

We now state the two local primitives that support the process described above; their implementation is deferred to \Cref{sec:implementation}.
The first primitive is a \emph{ranking} structure: given a set of points, it answers rank queries among them, and is used to perform the single comparison that resolves the query at the end of the process.
The second primitive is a \emph{lifting} structure: it takes a point and returns a suitable substitute whose origin is closer to the root of $\Tcol$, and is used to carry out the intermediate replacement steps.

\paragraph{Ranks of points.}
For a set of points $L$ and a point $p \in L$, we define the \emph{rank} of $p$ in $L$ as the number of points in $L$ not heavier than $p$.
In particular, knowing the ranks of two points in $L$ immediately tells us which of them is heavier.

\begin{restatable}{lemma}{RestateRankPrimitive}\label{lem:rank-primitive}
Let $u$ be a node of $\Tcol$ with $\lambda_u \coloneqq |\interval{u}|$.
For every nonempty set $L\subseteq \act{u}$, there exists an encoding using
\[
	\Oh\!\left(|L|\log m+|L|\log\frac{m\lambda_u}{|L|}\right)
\]
bits that, given a point $p\in L$, returns in $\Oh(1)$ time the rank of $p$ in $L$.
\end{restatable}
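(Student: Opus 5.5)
The plan is to exploit the structure of $\act{u}$ directly: it splits into $\Oh(m)$ \emph{chains}, and inside each chain the weight order is simply the column order. Then $L$ only needs to record how these chains interleave in the global weight order.

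\textbf{Step 1: chain structure of $\act{u}$.}
Let $u$ be internal with children $u_1,u_2$, and fix a row $i$ and a child $u_s$ with $\interval{u_s}=[a_s\dd b_s]$.
\begin{itemize}
\item A point $(i,j)$ with $j\in[a_s\dd b_s]$ is visible from $a_s$ iff it is a suffix maximum of row $i$ restricted to $[a_s\dd j]$. Scanning leftwards from such a point, weights increase as $j$ decreases.
\item A point $(i,j)$ in the same range is visible from $b_s$ iff it is a prefix maximum of the row started from the right.
\item Visibility from an endpoint of the other child's interval implies visibility from the nearer endpoint of $\interval{u_s}$, by the argument of \Cref{lem:active-path}. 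So these two families cover all of $\act{u}\cap(\{i\}\times\interval{u_s})$.
\item Each family is a monotone chain: among its points, weight is a monotone function of column.
\item The two families share exactly one point, the maximum $(i,j^*_{i,s})$ of row $i$ on $\interval{u_s}$. The first family lies weakly left of $j^*_{i,s}$ and the second weakly right.
\item Assign $(i,j^*_{i,s})$ to the left chain. This gives a partition of $\act{u}$ into at most $4m$ disjoint chains $C_{i,s,\pm}$.
\end{itemize}
The leaf case is trivial: $\lambda_u=1$, and the points of the single column are ranked by one stored permutation of $L$ of $\Oh(|L|\log m)$ bits.

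\textbf{Step 2: the encoding for a given $L$.}
Call a row-half $(i,s)$ nonempty if $L$ has a point in row $i$ within $\interval{u_s}$. For each nonempty row-half we store the following.
\begin{itemize}
\item The set $X_{i,s}$ of columns of $L$ in that row-half, via \Cref{lem:sparse} with universe $\lambda_u$.
\item The count $\ell_{i,s}$ of those columns lying in the left chain, using $\Oh(\log(|X_{i,s}|+1))$ bits.
\item For each of the two chains $c$ in the row-half, the set $G_c\subseteq[1\dd|L|]$ of global ranks (within $L$) of its points, via \Cref{lem:sparse}.
\end{itemize}
To address these structures we add two auxiliary pieces.
\begin{itemize}
\item The set of nonempty row-halves, via \Cref{lem:sparse} over universe $2m$.
\item Elias--Fano prefix sums (\Cref{lem:mono}) of the sizes of the concatenated bit strings, so that the block of any row-half can be located in $\Oh(1)$ time.
\end{itemize}

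\textbf{Step 3: the query.}
Given $p=(i,j)\in L$, the algorithm proceeds in five constant-time steps.
\begin{enumerate}
\item Determine $s$ from $j$ and the midpoint of $\interval{u}$.
\item Find the index of $(i,s)$ by partial rank on the set of nonempty row-halves.
\item Compute $r=\rank_{X_{i,s}}(j)$. This is a partial rank query, which is allowed since $j\in X_{i,s}$.
\item Decide the chain: $p$ lies in the left chain iff $r\le \ell_{i,s}$. Its position within the chain is $r$ or $r-\ell_{i,s}$.
\item Since weights within a chain are monotone in column, this position gives the weight-rank $t$ of $p$ within $G_c$, after reversing the order if the chain is decreasing. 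Return $\select_{G_c}(t)$.
\end{enumerate}

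\textbf{Step 4: space.}
This is the main step to check. Write $k_{i,s}=|X_{i,s}|$ and $k_c=|G_c|$.
\begin{itemize}
\item The column sets cost $\sum (k_{i,s}+k_{i,s}\log(\lambda_u/k_{i,s}))$. By \Cref{lem:logsum} with total universe at most $2m\lambda_u$, this is $\Oh(|L|\log(m\lambda_u/|L|)+|L|)$.
\item The rank sets cost $\sum(k_c+k_c\log(|L|/k_c))$ over at most $4m$ chains. By \Cref{lem:logsum} with total universe at most $4m|L|$, this is $\Oh(|L|\log m)$.
\item The counts $\ell_{i,s}$, the nonempty-row-half set, and the prefix-sum arrays each cost $\Oh(|L|\log m+|L|)$ bits, since there are at most $\min(2m,|L|)$ nonempty row-halves.
\end{itemize}

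The one point to watch is that no per-row overhead of order $\log\lambda_u$ or $\log m$ is paid for empty rows; this is why only nonempty row-halves are indexed. Once that is ensured, the two applications of the log-sum inequality give the stated bound.
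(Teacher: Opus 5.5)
Your proof is correct and is essentially the paper's argument: your chains $C_{i,s,\pm}$ (with the row-half maximum assigned to the left chain) are exactly the paper's quarter-rows. The two layers are also the same: a sparse column set gives the rank inside a chain, a sparse set of global positions queried by select gives the rank in $L$, and each layer is summed via the log-sum inequality. Your deviations are minor and, if anything, in your favour: you find the chain of $p$ from the stored split count $\ell_{i,s}$ instead of a visibility test against the global row RMQ structures, which keeps the encoding self-contained, and you explicitly account for addressing only the nonempty row-halves, which the paper leaves implicit. (Note one slip in Step 1: on the left chain the weight increases with the column, not as $j$ decreases; this is harmless because your query fixes the direction by chain type.)
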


The intended use of \Cref{lem:rank-primitive} is simple:
if we have built the encoding for some subset $L\subseteq \act{u}$
and both points of the current pair happen to lie in $L$,
the decision problem can be resolved immediately by comparing their ranks in $L$.

\paragraph{Predecessors and successors.}
For a set of points $V$ and a point $p$, the \emph{predecessor} of $p$ in $V$ is the heaviest point in
$\set{q\in V : \weight(q) \le \weight(p)}$,
and the \emph{successor} of $p$ in $V$ is the lightest point in
$\set{q\in V : \weight(p) \le \weight(q)}$.
If the first set is empty, the predecessor is defined to be $\Bot$, and if the second set is empty, the successor is defined to be $\Top$, whose weights are understood to be $-\infty$ and $+\infty$, respectively.


\begin{observation}\label{obs:equiv}
	Let $p, q$ be points, and let $P, Q$ be such that $p \in P$ and $q \in Q$.
	Let $p'$ be the successor of $p$ in $Q$, and let $q'$ be the predecessor of $q$ in $P$.
	Then
	\[
		\weight(p) \le \weight(q) \iff \weight(p') \le \weight(q) \iff \weight(p) \le \weight(q').
	\]
	\end{observation}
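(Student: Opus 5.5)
We prove the two equivalences separately; each follows from the definition of successor (respectively predecessor) by a short case analysis on whether the relevant point is a sentinel. Throughout we use that all weights are distinct, and that $\Top$ and $\Bot$ carry weights $+\infty$ and $-\infty$.

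\textbf{First equivalence.} Let $p'$ be the successor of $p$ in $Q$, the lightest point of $\set{x\in Q : \weight(p)\le \weight(x)}$.

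Suppose first that $\weight(p)\le\weight(q)$. Since $q\in Q$, the point $q$ belongs to this set, so the set is nonempty and $p'\ne\Top$. Minimality of $p'$ then gives $\weight(p')\le\weight(q)$.

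Conversely, suppose $\weight(p')\le\weight(q)$. If $p'=\Top$, this would force $+\infty\le\weight(q)$, which is impossible, so $p'$ is a genuine point of the set. By definition of that set, $\weight(p)\le\weight(p')\le\weight(q)$.

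\textbf{Second equivalence.} This case is symmetric. Let $q'$ be the predecessor of $q$ in $P$, the heaviest point of $\set{x\in P : \weight(x)\le\weight(q)}$.

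Suppose $\weight(p)\le\weight(q)$. Since $p\in P$, the point $p$ lies in this set, so $q'\ne\Bot$. Maximality of $q'$ gives $\weight(p)\le\weight(q')$.

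Conversely, suppose $\weight(p)\le\weight(q')$. Then $q'\neq\Bot$, since otherwise $\weight(p)\le-\infty$. Hence $q'$ lies in the set, and $\weight(p)\le\weight(q')\le\weight(q)$.

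Chaining the two equivalences through $\weight(p)\le\weight(q)$ yields the stated three-way equivalence. The only step requiring care is handling the sentinels $\Top$ and $\Bot$: we must check that the infinite weights make every comparison involving a sentinel fail in exactly the direction the argument needs. Beyond that, the statement is immediate from the definitions.
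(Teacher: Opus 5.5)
Your proof is correct: the paper states this as an observation without a written proof, and your direct unwinding of the definitions of successor and predecessor is exactly the intended justification. The forward directions use $q\in Q$ (resp.\ $p\in P$) to make the defining set nonempty, and the converses rule out the sentinels $\Top$, $\Bot$ via their infinite weights. As a minor remark, distinctness of weights is needed only for the successor and predecessor to be well defined, not in the logic of the equivalences themselves.
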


Thus, provided that the other point belongs to the target set, one may replace one of the current points with an appropriate predecessor or successor, while preserving the invariant.

\begin{restatable}{lemma}{RestateLiftPrimitive}\label{lem:lift-primitive}
Let $u$ be a node of $\Tcol$ with $\lambda_u \coloneqq |\interval{u}|$.
For every nonempty source set $S\subseteq \act{u}$ and target set $T\subseteq \act{u}$, there exists an encoding using
\[
  \Oh\!\left(|S|\log m+ |S|\log\frac{m\lambda_u}{|S|}\right)
\]
bits that, given a point $p\in S$, returns in $\Oh(1)$ time the predecessor and successor of $p$ in $T$.
\end{restatable}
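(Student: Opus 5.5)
The plan is to use the decomposition announced in the overview. Each active set $\act{u}$ is covered by $\Oh(m)$ \emph{quarter-rows}, and inside each of them the weight is monotone in the column. Fix a row $i$ and a column $c\in\partial u$, and restrict to the child interval on one side of $c$. The points of row $i$ visible from $c$ are exactly the running maxima when scanning away from $c$. Their weights therefore increase with the distance from $c$, so they are monotone in the column. This yields at most a constant number of chains per row, hence $\Oh(m)$ chains in total. A point lying in several chains is assigned to a canonical one, namely the first in a fixed order. To compute the chain of a given $p\in S$ in $\Oh(1)$ time, I would test visibility from each of the $\Oh(1)$ columns of $\partial u$ with the 1D RMQ structure of row $\row(p)$ from \Cref{lem:rmq1d}, which is already stored by \Cref{lem:two-cand}.

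The encoding, described for predecessors, would proceed in four steps.
\begin{enumerate}
\item Every $p\in S$ gets a label $c(p)$. This is either the chain $C$ of $T$ containing its predecessor, or a flag for $\Bot$.
\item For each pair $(D,C)$ of a chain $D$ of $S$ and a chain $C$ of $T$, let $S_{D,C}=\{p\in S\cap D : c(p)=C\}$. Store the set of columns of $S_{D,C}$ with \Cref{lem:sparse}. Since we only query columns of members, partial rank gives the index of $p$ inside $S_{D,C}$ in $\Oh(1)$ time.
\item For each pair $(D,C)$, store the sequence of columns of the predecessors of the points of $S_{D,C}$, listed in column order of $p$, with \Cref{lem:mono}.
\item For each $D$, store the labels $c(p)$ as an array of $\Oh(\log m)$-bit entries, indexed by the partial rank of $\col(p)$ in the set of columns of $S\cap D$ (again \Cref{lem:sparse}).
\end{enumerate}
A query then computes $D$, then $c(p)$, then the index of $p$ in $S_{D,C}$, and reads off the predecessor's column. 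Its row is the row of chain $C$, so the predecessor is fully determined. Successors are handled symmetrically at twice the space.

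The key correctness claim is that the sequence stored in step 3 is monotone, so that \Cref{lem:mono} applies. Moving along $D$ in column order, $\weight(p)$ is monotone. For $p\in S_{D,C}$, the predecessor in $T$ equals the predecessor in $T\cap C$, because the overall predecessor lies in $C$. Predecessors in a fixed set are monotone in $\weight(p)$, and inside $C$ the weight is monotone in the column. Hence the predecessor's column is monotone along $S_{D,C}$. I expect this step, together with the precise chain decomposition and the choice of canonical chains, to be the main obstacle. The canonical assignment must be consistent so that each point of $T$ is indexed by a unique $(C,\text{column})$, and the chain of a point of $S$ must be computable in $\Oh(1)$ time. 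Note that $T$ itself is never stored, so the space depends only on $|S|$.

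For space, the labels of step 4 cost $\Oh(|S|\log m)$ bits. The structures of steps 2 and 3 form a family indexed by $\Oh(m^2)$ pairs, each with $k_{D,C}$ elements over a universe of size at most $\lambda_u$. Adding the $\Oh(m)$ sets of step 4, the log-sum inequality (\Cref{lem:logsum}) bounds their total by
\[
\Oh\!\left(|S| + |S|\log\frac{m^2\lambda_u}{|S|}\right)=\Oh\!\left(|S|\log m+|S|\log\frac{m\lambda_u}{|S|}\right).
\]
Empty pairs need no structure. The pairs whose sets are empty must still be skipped at no cost, so I would store a pointer table indexed by the pairs that actually occur, or charge it within the $\Oh(|S|\log m)$ term.
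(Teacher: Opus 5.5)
Your proposal is correct, and its core matches the paper's proof. Both use quarter-rows as monotone chains, store a per-point label naming the target quarter-row that contains the predecessor, and store, per target quarter-row, an Elias--Fano sequence of predecessor columns. Monotonicity follows in both from monotonicity of predecessors together with \Cref{obs:quarter-row-monotone}.

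The difference is how you index into these arrays. The paper indexes the label array by the weight-rank of $p$ in $S$, and the sequence $c_Q$ by the weight-rank of $p$ in $S_Q$. It obtains both ranks from \Cref{lem:rank-primitive}, whose second layer (the position sets $B_Q$) interleaves the different source quarter-rows.

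You refine the groups by source chain as well. Within $S_{D,C}$ the weight order of $p$ coincides with, or reverses, its column order, so a partial-rank query on columns via \Cref{lem:sparse} suffices and no interleaving layer is needed. The price is $\Oh(m^2)$ groups instead of $\Oh(m)$, which the log-sum inequality absorbs into the $|S|\log m$ term. In effect, your sets $S_{D,Q}$ are exactly the first layer of the paper's ranking structure for $S_Q$; you attach a monotone sequence directly to each of them rather than to their union.

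So your construction is self-contained and slightly more elementary. The paper's is more modular, reusing the ranking primitive as a black box. The directory issue you flag for locating nonempty groups is present in the paper's construction as well and is not handled more explicitly there.
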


The remainder of the paper is devoted to choosing suitable sets for
\Cref{lem:rank-primitive,lem:lift-primitive}.
The warm-up structure uses one lifting set per node of $\Tcol$ and therefore raises the origin by one or more levels at a time.
The faster structure stores several such sets, organised by a tree over the depth range of $\Tcol$.

\subsection{The Baseline Structure}\label{sec:baseline}

We first describe a simpler $\Oh(mn(\log m + \log\log n))$-bit encoding with $\Oh(\log n)$ query time
before we move to the full trade-off.

\paragraph{Stored structures.}
For every point $p$, we store $\depth(\orig(p))$.
Note that $\orig(p)$ is uniquely determined by $\col(p)$ and $\depth(\orig(p))$, as it is the unique node of $\Tcol$ at that depth whose interval contains $\col(p)$, by \Cref{lem:active-path}.
We assume standard constant-time navigation in $\Tcol$, which can be implemented with bitwise operations.

For every node $u \in \Tcol$, we define $\acteq{u}$ as the set of points in $\act{u}$ whose origin is $u$ itself, and $\actup{u}$ as those whose origin is a strict ancestor of~$u$; by \Cref{lem:active-path}, these two sets partition~$\act{u}$.
For each such node $u$, we store the following two structures:
\begin{itemize}
	\item by \Cref{lem:rank-primitive}, a local ranking structure for the set $\acteq{u}$, and
	\item by \Cref{lem:lift-primitive}, a lifting structure with source set $\acteq{u}$ and target set $\actup{u}$.
\end{itemize}

The idea is as follows.
If the two current points have the same origin $u$, then both belong to $\acteq{u}$, and the ranking structure stored for $u$ resolves the comparison.
Otherwise, let $u$ be the deeper of the two origins; then the corresponding point belongs to $\acteq{u}$, while the other point is also active in $u$ with its origin strictly above, and hence belongs to $\actup{u}$.
In this case, the lifting structure stored for $u$ is used.
We now formalize this.

\paragraph{Query algorithm.}
Let $R$ be the query rectangle.
Using \Cref{lem:two-cand}, we obtain two candidate points $p_0, q_0\in R$ such that $\max_\weight(R)$ is the heavier of the two.
By \Cref{lem:mv-implies-coactive}, the pair $(p_0, q_0)$ is co-active.
We fix the hypothesis $\mathcal{H} : \weight(p_0) \le \weight(q_0)$ and maintain an ordered pair $(p,q)$, initially $(p_0, q_0)$, with the invariant that the statement $\weight(p) \le \weight(q)$ is equivalent to $\mathcal{H}$.

If $\orig(p) = \orig(q) = u$, then $p, q\in \acteq{u}$, and the ranking structure stored for $u$ resolves the comparison immediately.
From now on assume that the two origins are different.

Suppose first that $\depth(\orig(q)) > \depth(\orig(p))$, and let $u \coloneqq \orig(q)$.
By \Cref{lem:coactive-char}, we have $p \in \act{u}$, and hence $p \in \actup{u}$.
In this case we use the lifting structure stored for $u$ to obtain $q'$, the predecessor of $q$ in $\actup{u}$.
By the invariant, the statement $\weight(p)\le\weight(q)$ is equivalent to $\mathcal{H}$, and since $p\in\actup{u}$, \Cref{obs:equiv} tells us that the same holds for $\weight(p)\le\weight(q')$.
At this point, either $q'=\Bot$, in which case the hypothesis is false and the query is resolved immediately, or else $q'\in \actup{u}$, and we replace $q$ with $q'$.
In the latter case, the origin of $q'$ lies strictly above $u$, thus the origin of the deeper point has moved upward.
Also, since both $p$ and $q'$ belong to $\act{u}$, the new pair $(p,q')$ is again co-active.

The case $\depth(\orig(p)) > \depth(\orig(q))$ is symmetric: we now take $u \coloneqq \orig(p)$, and use the lifting structure stored for $u$ to replace $p$ with its \emph{successor} $p'$ in $\actup{u}$, where $q\in\actup{u}$ by \Cref{lem:coactive-char}.
As before, \Cref{obs:equiv} preserves the invariant, and the two possible outcomes mirror those of the previous case: either $p'=\Top$, which resolves the query, or we get a substitute point $p'\in\actup{u}$, with origin strictly above $u$, and a new co-active pair $(p', q)$.

The algorithm continues in this way until the two current points have the same origin, at which point the ranking structure finishes the comparison.
Since the height of $\Tcol$ is $\Oh(\log n)$, there can be at most $\Oh(\log n)$ such steps, and all work performed in each step is constant-time.
Hence the total query time is $\Oh(\log n)$.

\paragraph{Space analysis.}
Most of the space analysis is taken care of by the following lemma, which we state as a self-contained bound.
Its statement is tailored to match the per-node space bound of \Cref{lem:rank-primitive,lem:lift-primitive}, so that applying it reduces to verifying that the relevant families of sets are pairwise disjoint.
Isolating the statement in this way also lets us reuse the same bound later in the space analysis of the full structure.

\begin{lemma}\label{lem:baseline-entropy}
	Suppose that with every node $u\in\Tcol$ we associate a set of points $X_u$,
	and that the family $\{X_u : u\in\Tcol\}$ is pairwise disjoint.
	If for every $u$ with $X_u \neq \emptyset$ we store a structure~in
	$\Oh(|X_u|\log m + |X_u|\log(m\lambda_u/|X_u|))$
	bits, where $\lambda_u \coloneqq |\interval{u}|$, then the total space of all these structures is
	$\Oh(mn\log m + mn\log\log n)$
	bits.
\end{lemma}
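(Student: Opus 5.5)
We bound the two terms separately. The first term is easy. Since the sets $X_u$ are pairwise disjoint subsets of the $mn$ points, we have $\sum_u |X_u| \le mn$. So $\sum_u |X_u|\log m = \Oh(mn\log m)$, and the additive $\Oh(|X_u|)$ hidden constants contribute only $\Oh(mn)$.

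For the entropy term we group the nodes of $\Tcol$ by depth. Fix a depth $d$ and let $D_d$ be the set of nodes at that depth with $X_u\neq\emptyset$. The intervals of nodes at the same depth are disjoint, so $\sum_{u\in D_d} m\lambda_u \le mn$. Write $K_d \coloneqq \sum_{u\in D_d}|X_u|$. Applying \Cref{lem:logsum} within depth $d$ gives
\[
\sum_{u\in D_d}|X_u|\log\frac{m\lambda_u}{|X_u|}\le K_d\log\frac{mn}{K_d}.
\]

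Next we combine the depths, which is where the $\log\log n$ arises. There are $h=\Oh(\log n)$ depths, and $\sum_d K_d\le mn$. Applying \Cref{lem:logsum} a second time, now across depths with $U_d \coloneqq mn$ for each $d$, yields
\[
\sum_d K_d\log\frac{mn}{K_d}\le K\log\frac{h\cdot mn}{K},
\]
where $K=\sum_d K_d$. The function $x\mapsto x\log(hmn/x)$ is increasing for $x\le hmn/e$, and $K\le mn$ lies in that range. Hence the bound is at most $mn\log h = \Oh(mn\log\log n)$.

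One could apply \Cref{lem:logsum} in a single shot over all nodes instead. The only care needed is to justify the monotonicity of $x\log(U/x)$ so that we may replace $K$ by $mn$, and to ensure the ratios stay bounded away from being dominated by tiny sets. The latter issue is absorbed because every nonempty $X_u$ has $|X_u|\ge 1$.

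The step I would watch most closely is the one requiring $\sum_{u}\lambda_u$ over nonempty nodes to be only $\Oh(n\log n)$, rather than relying on $|X_u|\le m\lambda_u$ term by term. This holds because each depth contributes at most $n$. Summing gives $\Oh(mn\log m+mn\log\log n)$ bits in total.
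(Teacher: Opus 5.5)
Your proof is correct and follows the paper's argument essentially step for step. You bound the $|X_u|\log m$ term by disjointness, apply \Cref{lem:logsum} within each depth of $\Tcol$ using that same-depth intervals are disjoint, and apply it again across the $h=\Oh(\log n)$ depths with $U_d=mn$; your monotonicity argument for replacing $K$ by $mn$ in $K\log(hmn/K)$ (valid once $h\ge 3$, with smaller $h$ trivially giving $\Oh(mn)$) makes explicit a step the paper leaves implicit.
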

\begin{proof}
	We split the per-node bound into its two summands and handle them separately.
	For the first summand, since the sets $X_u$ are pairwise disjoint subsets of $[1\dd m]\times[1\dd n]$, we have $\sum_u |X_u| \le mn$, and hence
	\[
		\sum_u |X_u|\log m \le mn\log m.
	\]
	It remains to bound $\sum_u |X_u|\log(m\lambda_u/|X_u|)$ by $\Oh(mn\log\log n)$.

	Let $H$ be the height of $\Tcol$ (so $H=\Oh(\log n)$).
	For every depth $d\in[1\dd H]$, let $\mathcal{N}_d$ be the set of nodes $u$ at depth $d$ with $X_u\neq\emptyset$, and define
	\[
		K_d \coloneqq \sum_{u\in\mathcal{N}_d} |X_u|.
	\]
	Since the sets $X_u$ are pairwise disjoint, we have
	$\sum_{d=1}^H K_d \le mn$.

	Fix a depth $d$ with $K_d > 0$.
	The nodes of $\Tcol$ at depth $d$ have pairwise disjoint intervals, so
	$\sum_{u\in\mathcal{N}_d} m\lambda_u \le mn$.
	Applying \Cref{lem:logsum} to the values $k_u \coloneqq |X_u|$ and $U_u \coloneqq m\lambda_u$ for $u\in\mathcal{N}_d$, we obtain
	\[
		\sum_{u\in\mathcal{N}_d} |X_u| \log \frac{m\lambda_u}{|X_u|}
		\le K_d \log \frac{mn}{K_d}.
	\]

	Now summing over all depths with $K_d>0$ and applying \Cref{lem:logsum} once again, this time to the values $K_d$ with $U_d=mn$ for every such $d$, we get the total cost bound
	\[
		\sum_{\substack{1\le d\le H \\ K_d>0}} K_d \log \frac{mn}{K_d}
		\le mn \log \frac{H\cdot mn}{mn}
		= mn\log H
		= \Oh(mn\log\log n). \qedhere
	\]
\end{proof}

We now sum the space over all components.
For every point $p$, the value $\depth(\orig(p))$ belongs to $[1\dd H]$, where $H=\Oh(\log n)$ is the height of $\Tcol$, and thus can be stored in $\Oh(\log\log n)$ bits per point, for a total of $\Oh(mn\log\log n)$ bits.

Next, we bound the total space of the local ranking structures.
For every node $u\in\Tcol$ with $\acteq{u}\neq\emptyset$, we store one such structure on the set $\acteq{u}$, and its per-node cost, given by \Cref{lem:rank-primitive}, coincides exactly with the one assumed in \Cref{lem:baseline-entropy} (with $X_u=\acteq{u}$).
Since every point has exactly one origin, the family $\{\acteq{u} : u\in\Tcol\}$ partitions the set of all points, and in particular is pairwise disjoint.
Hence \Cref{lem:baseline-entropy} applies and bounds the total space of the ranking structures by $\Oh(mn\log m + mn\log\log n)$ bits.

The same argument applies to the lifting structures: each is built on $\acteq{u}$ as its source set, and the per-node bound in \Cref{lem:lift-primitive} is identical to that of \Cref{lem:rank-primitive}.
Their total space is therefore also $\Oh(mn\log m + mn\log\log n)$ bits.

Finally, the reduction structure of \Cref{lem:two-cand} uses $\Oh(mn\log m)$ bits, which completes the proof, as all components fit within $\Oh(mn(\log m + \log\log n))$ bits of space.

\subsection{The Full Structure}\label{sec:full}

We now describe the general construction achieving the trade-off from \Cref{thm:main}.
As in the baseline structure, the query algorithm fixes a comparison hypothesis and repeatedly replaces one of the two current points.
The difference is that we no longer move one level of depth at a time in the tree $\Tcol$.
Instead, we group consecutive depths of $\Tcol$ into larger blocks, and each lifting step moves the deeper point from its current block into an earlier one.
The local ranking structures are also adapted: rather than comparing only points with exactly the same origin, they compare points whose origins lie in one prescribed depth block.
As in the baseline structure, we store $\depth(\orig(p))$ for every point $p$.

\paragraph{Tree on depths.}
Fix a parameter $\kappa\in[1,\log\log n]$ and let
$
	\tau \coloneqq  \lceil\log^{1/\kappa}n\rceil.
$
Recall that the height of the complete binary tree $\Tcol$ is $H=\Oh(\log n)$.
We build a complete $\tau$-ary tree $\Ddep$ over the depths $[1\dd H]$ by recursively partitioning every interval into at most $\tau$ consecutive sub-intervals of almost equal lengths, with the children of every node ordered so that smaller depths lie to the left.
For every node $x\in\Ddep$, let $\interval{x}\subseteq[1\dd H]$ denote the depth interval represented by $x$.
The height of $\Ddep$ is~$\Oh(\kappa)$.

The tree $\Ddep$ is an auxiliary structure used to organise the query process.
We assume standard constant-time navigation in both $\Ddep$ and $\Tcol$, which can be implemented with bitwise operations (for $\Ddep$ this requires rounding up $\tau$ to the nearest power of two).

Given two points, their origin depths determine two leaves of $\Ddep$, and hence their lowest common ancestor node, whose children separate them.
The algorithm will work relative to this one node of $\Ddep$: it repeatedly moves the deeper origin across its children until both current origin depths lie in the interval of the same child.
At that point a local ranking structure finishes the comparison.

\paragraph{Ranking structures.}
For every depth $d\in[1\dd H]$, let $C_d$ denote the largest \emph{canonical} interval of $\Ddep$ of the form $[a\dd d]$, where an interval is canonical if it appears as $\interval{x}$ for some node $x\in\Ddep$.
Now let $u\in\Tcol$ be any node of depth $d$.
We define
\[
	\actL{u} \coloneqq \set{p\in \act{u} : \depth(\orig(p))\in C_d}.
\]
For every node $u\in\Tcol$, we store one local ranking structure of \Cref{lem:rank-primitive} for the set $\actL{u}$.

\paragraph{Lifting structures.}
We next describe the lifting structures.
Let $y$ be a non-leftmost child of a node $x\in\Ddep$.
If we denote $\interval{x}=[a\dd b]$ and $\interval{y}=[a'\dd b']$,
then the depths lying in child intervals of $x$ strictly to the left of $y$ are precisely those in~$[a\dd a'-1]$.

Now let $u\in\Tcol$ be a node such that $\depth(u)\in\interval{y}$.
We define
\[
	\actT{y}{u} \coloneqq \set{p\in \act{u} : \depth(\orig(p))\in [a\dd a'-1]}.
\]
In words, $\actT{y}{u}$ consists of the points that are active in $u$, whose origin depths lie in one of the intervals of the left siblings of $y$.
For every such pair $(y,u)$, we store a lifting structure of \Cref{lem:lift-primitive} with source set $\acteq{u}$ (the points with origin $u$) and target set $\actT{y}{u}$.

This is a natural generalisation of the baseline structure:
instead of lifting a point with origin $u$ from $\depth(u)$ to \emph{any} smaller depth, we now lift it only into the union of the child intervals of $x$ that lie to the left of the one currently containing $\depth(u)$.
In particular, for a fixed node $u\in\Tcol$, the sets $\actT{y}{u}$ partition the set $\actup{u}$ (points active in $u$ whose origins lie strictly above $u$) according to the first branching node in $\Ddep$ at which their origin depths~diverge.

\paragraph{Query algorithm.}
Let $R$ be the query rectangle.
As before, \Cref{lem:two-cand} returns two candidate points $p_0,q_0\in R$, and \Cref{lem:mv-implies-coactive} guarantees that this pair is co-active.
We fix the hypothesis $\mathcal{H} : \weight(p_0)\le \weight(q_0)$ and maintain an ordered pair $(p,q)$, initially $(p_0,q_0)$, with the same invariant as in the baseline structure.

If the two initial points have the same origin $u$, then both belong to $\actL{u}$, since both are in $\act{u}$ and $\depth(u)\in C_{\depth(u)}$.
Hence the ranking structure stored for $u$ compares them immediately.
Assume from now on that the current origins are different.

\begin{figure}
  \centering
  \begin{tikzpicture}
  \def\m{6mm}
  \def\n{75mm}
  \def\fatsep{1.3pt}
  \def\pathpos{40mm}

  \def\px{30mm}
  \def\qx{50mm}
  \def\py{4mm}
  \def\qy{2mm}
  \def\divy{18mm}

  \def\treespace{2mm}

  \coordinate(wDL) at (0, 0);
  \coordinate(wDR) at (\n, 0);
  \coordinate(wUR) at (\n, \m);
  \coordinate(wUL) at (0, \m);
  \coordinate(wUM) at ($(wUL)!0.5!(wUR)$);

  \draw[thick] (wDL) rectangle (wUR);

  \coordinate(TL) at ([yshift=\treespace]wUL);
  \coordinate(TR) at ([yshift=\treespace]wUR);
  \coordinate(TU) at ([yshift=5cm]wUM);

  \draw[thick] (TL) -- (TR) -- (TU) -- cycle;

  \coordinate (DD) at ([xshift=\treespace]TR);
  \coordinate (DU) at (DD |- TU);
  \coordinate (DL) at ($(DD)!0.5!(DU)$);
  \coordinate (DR) at ([xshift=24mm]DL);

  \draw[thick] (DD) -- (DR) -- (DU) -- cycle;

  \coordinate (X) at ([xshift=12mm, yshift=1mm]DL);

  \def\H{6mm}
  \def\span{8mm}

  \coordinate (y2) at (X -| DL);
  \coordinate (y1) at ([yshift=-\span]y2);
  \coordinate (y0) at ([yshift=-\span]y1);
  \coordinate (y3) at ([yshift=+\span]y2);
  \coordinate (y4) at ([yshift=+\span]y3);

  \coordinate (r0) at ([xshift=\H]$(y0)!0.5!(y1)$);
  \coordinate (r1) at ([xshift=\H]$(y1)!0.5!(y2)$);
  \coordinate (r2) at ([xshift=\H]$(y2)!0.5!(y3)$);
  \coordinate (r3) at ([xshift=\H]$(y3)!0.5!(y4)$);

  \coordinate (Tp) at (\px, \m + \treespace);
  \coordinate (Tq) at (\qx, \m + \treespace);
  \coordinate (Div) at (\pathpos, \divy);
  \coordinate (OP) at ([yshift=-1mm]Div |- r2);
  \coordinate (OQ) at ([yshift=+1mm]Div |- r0);
  \coordinate (p) at (\px, \py);
  \coordinate (q) at (\qx, \qy);

  \def\pathoffset{0.4pt} 

  \tikzset{snake path/.style={thick, decorate, decoration={snake, amplitude=0.3mm, segment length=4mm}}}

  \draw[snake path, black, semithick] (DR) -- (X);
  \draw[snake path, black, semithick] (TU) -- (OP);
  \draw[snake path, black, semithick] (OP -| DL) -- (r2);
  \draw[snake path, black, semithick] (OQ -| DL) -- (r0);

  \draw[red, snake path] (OP) -- (OQ);

  \draw[red,  snake path, transform canvas={xshift=-\pathoffset}] (OQ) -- (Div);
  \draw[blue, snake path, transform canvas={xshift=+\pathoffset}] (OQ) -- (Div);

  \draw[red,  thick, decorate, decoration={snake, amplitude=0.3mm, segment length=4mm}] ([xshift=-\pathoffset]Div) to[out=270, in=60, looseness=0.6] (Tp);
  \draw[blue, thick, decorate, decoration={snake, amplitude=0.3mm, segment length=4mm}] ([xshift=+\pathoffset]Div) to[out=270, in=110, looseness=0.6] (Tq);

  \draw[densely dotted, semithick] (OP) -- (OP -| DL);
  \draw[densely dotted, semithick] (OQ) -- (OQ -| DL);

  \draw[densely dotted, semithick] (p) -- (Tp);
  \draw[densely dotted, semithick] (q) -- (Tq);

  \tikzset{subtree outline/.style={semithick}}
  \draw[subtree outline] (y0) -- (y1) -- (r0) -- cycle;
  \draw[subtree outline] (y1) -- (y2) -- (r1) -- cycle;
  \draw[subtree outline] (y2) -- (y3) -- (r2) -- cycle;
  \draw[subtree outline] (y3) -- (y4) -- (r3) -- cycle;

  \tikzset{subtree edge/.style={semithick}}
  \draw[subtree edge] (X) -- (r0) {};
  \draw[subtree edge] (X) -- (r1) {};
  \draw[subtree edge] (X) -- (r2) {};
  \draw[subtree edge] (X) -- (r3) {};

  \tikzset{subtree root/.style={circle, fill=black, inner sep=1pt}}
  \tikzset{fat dot/.style={circle, fill=black, inner sep=1.2pt}}

  \node[fat dot, blue] at (r0) {};
  \node[subtree root] at (r1) {};
  \node[fat dot, red] at (r2) {};
  \node[subtree root] at (r3) {};
  \node[subtree root] at (X) {};

  \node[fat dot, red] at (OP) {};
  \node[fat dot, blue] at (OQ) {};
  \node[fat dot, red] at (p) {};
  \node[fat dot, blue] at (q) {};

  \node at ([xshift=3pt, yshift=5pt]X) {$x$};
  \node at ([xshift=3pt, yshift=5pt]r2) {$y_p$};
  \node at ([xshift=3pt, yshift=-6pt]r0) {$y_q$};

  \node at ([xshift=-8pt, yshift=0pt]p) {$p$};
  \node at ([xshift=-8pt, yshift=0pt]q) {$q$};
  \node at ([xshift=-20pt, yshift=0pt]OP) {$\orig(p)$};
  \node at ([xshift=-20pt, yshift=0pt]OQ) {$\orig(q)$};

  \node at ([xshift=+10pt, yshift=0pt]TU) {$\mathcal{T}$};
  \node at ([xshift=0pt, yshift=10pt]DR) {$\mathcal{D}$};

  \tikzset{alignment/.style={densely dotted, semithick}}

  \def\bracespace{1mm}

  \draw[<->, transform canvas={xshift=-\treespace}] (wDL) -- (wUL) node[midway, left=0pt] {$m$};
  \draw[alignment] (wDL) -- ([xshift=-\treespace]wDL);
  \draw[alignment] (wUL) -- ([xshift=-\treespace]wUL);

  \draw[<->, transform canvas={yshift=-\treespace}] (wDL) -- (wDR) node[midway, below=0pt] {$n$};
  \draw[alignment] (wDL) -- ([yshift=-\treespace]wDL);
  \draw[alignment] (wDR) -- ([yshift=-\treespace]wDR);

  \draw[<->, transform canvas={xshift=-\treespace}] (TL) -- (TU -| TL) node[midway, left=0pt] {$H$};
  \draw[alignment] (TL) -- ([xshift=-\treespace]TL);
  \draw[alignment] (TU) -- ([xshift=-\treespace]TU -| TL);

  \draw[<->, transform canvas={yshift=-\treespace}] (wDL -| DD) -- (wDL -| DR) node[midway, below=0pt] {$\kappa$};
  \draw[alignment] (DD) -- ([yshift=-\treespace]wDL -| DD);
  \draw[alignment] (DR) -- ([yshift=-\treespace]wDL -| DR);

  \end{tikzpicture}
  \caption{The query algorithm. The red and blue paths in $\Tcol$ mark the nodes in which $p$ and $q$ are active, respectively.
	The depths of the two origins correspond to two leaves of $\Ddep$, whose lowest common ancestor is~$x$; the children $y_p$ and $y_q$ of $x$ contain these two depths. For clarity, $\Ddep$ is drawn with its root on the right, so that the left-to-right order of the children of $x$ appears top-to-bottom.}
  \label{fig:query}
\end{figure}

The two origin depths determine two leaves of $\Ddep$.
Let $x\in\Ddep$ be the lowest common ancestor of these two leaves, and let $y_p$ and $y_q$ be the children of $x$ containing $\depth(\orig(p))$ and $\depth(\orig(q))$, respectively; see \Cref{fig:query} for an illustration.
Since the child intervals of $x$ are consecutive depth ranges, the origin that lies in the rightmost of these two children is the deeper one.
We stress that $x$ remains fixed for the rest of the query: the target sets $\actT{y}{u}$ are defined so that the origin depths of the two current points never leave $\interval{x}$, and hence the node $x$ never needs to be recomputed.

Suppose first that $\depth(\orig(q)) > \depth(\orig(p))$.
Equivalently, $y_q$ lies strictly to the right of $y_p$.
Let $u \coloneqq \orig(q)$.
By \Cref{lem:coactive-char}, we have $p \in \act{u}$, and since $\depth(\orig(p))$ lies in a child interval of $x$ strictly to the left of $y_q$, by the definition of $\actT{y_q}{u}$, we have $p\in \actT{y_q}{u}$.

We now use the lifting structure stored for $(y_q,u)$ to obtain $q'$, the predecessor of $q$ in $\actT{y_q}{u}$.
Since $p\in \actT{y_q}{u}$, \Cref{obs:equiv} tells us that $\weight(p)\le\weight(q')$ is equivalent to $\weight(p)\le\weight(q)$, and hence to $\mathcal{H}$.
At this point, three cases may occur.
\begin{enumerate}[(i)]
	\item If $q'=\Bot$, then the current hypothesis is false, and the query is resolved immediately.
	
	\item If $q'\neq\Bot$ and the child $y_{q'}$ of $x$ containing $\depth(\orig(q'))$ is different from $y_p$, then we replace $q$ with $q'$ and continue.
	Since $p, q' \in \act{u}$, the new pair $(p,q')$ is again co-active.
	
	\item If $q'\neq\Bot$ and $y_{q'}=y_p$, then this is the last lifting step, and it remains to compare $p$ and $q'$.
	Let $\interval{y_p}=[a\dd b]$.
	Since $q'$ was obtained from the lifting structure stored at $u=\orig(q)$, both $p$ and $q'$ are active in $u$.
	Moreover, $\depth(u)\in \interval{y_q}$ and $y_q$ lies strictly to the right of $y_p$, so $\depth(u)>b$.
	Let $v$ be the ancestor of $u$ at depth $b$.
	By \Cref{lem:active-path}, both $p$ and $q'$ are active in $v$.

	The origin depth of $p$ belongs to $\interval{y_p}$ by definition of $y_p$, and the origin depth of $q'$ belongs to $\interval{y_p}$ because $y_{q'}=y_p$.
	Since $\interval{y_p}$ is a canonical interval ending at $b$, we have $\interval{y_p}\subseteq C_b$.
	Hence both points belong to $\actL{v}$, and the ranking structure stored for $v$ compares them in constant time.
\end{enumerate}

The case in which $\depth(\orig(p)) > \depth(\orig(q))$, or equivalently, when $y_p$ lies strictly to the right of $y_q$, is symmetric: we take $u \coloneqq \orig(p)$, and use the lifting structure stored for $(y_p,u)$ to replace $p$ with its \emph{successor} $p'$ in $\actT{y_p}{u}$, where $q \in \actT{y_p}{u}$ by a symmetric argument.
As before, \Cref{obs:equiv} preserves the invariant, and the three possible outcomes mirror those of the previous case.
If $p' = \Top$, the hypothesis is false and the query is resolved immediately.
Otherwise, if the child $y_{p'}$ of $x$ containing $\depth(\orig(p'))$ differs from $y_q$, we continue with the new co-active pair $(p', q)$.
Finally, if $y_{p'} = y_q$ and $\interval{y_q} = [a \dd b]$, we use the ranking structure stored for the ancestor of $u$ at depth $b$ and compare $p'$ and $q$ in constant time.

Each nonterminal lifting step preserves the truth of the current hypothesis and strictly decreases the child index of the deeper of the two current origins among the children of the fixed node $x\in\Ddep$ (even though which of the two origins is deeper may alternate between steps).
Since $x$ has at most $\tau$ children, at most $\tau-1$ such steps are possible before the algorithm terminates.
Every step takes constant time, so the total query time is $\Oh(\tau)=\Oh(\log^{1/\kappa}n)$.

\paragraph{Space analysis.}
As in the baseline structure, the bulk of the argument relies on \Cref{lem:baseline-entropy}.
The new complication is that the families of sets are no longer globally pairwise disjoint.
To recover disjointness, we partition the stored structures into $\Oh(\kappa)$ groups, one per level of $\Ddep$, so that within each group the relevant sets are pairwise disjoint.
\Cref{lem:baseline-entropy} then applies separately on each group, and summing over the $\Oh(\kappa)$ levels of $\Ddep$ yields the claimed bound.

We begin with the local ranking structures.
For every node $u\in\Tcol$, the ranking structure stored for $u$ is built for the set $\actL{u}$, and its per-node cost, given by \Cref{lem:rank-primitive}, coincides with the one assumed in \Cref{lem:baseline-entropy} (with $X_u=\actL{u}$).
Recall that $\actL{u}$ consists of the points in $\act{u}$ whose origin depth falls in $C_d$, the largest canonical interval of $\Ddep$ ending at $d=\depth(u)$.

We group the stored ranking structures by the level of $\Ddep$ at which $C_d$ appears, that is, by the level of the unique $\Ddep$-node whose span is $C_d$.
Fix one level $\ell$, and consider all nodes $u\in\Tcol$ for which $C_{\depth(u)}$ appears at level $\ell$.
We claim that the corresponding sets $\actL{u}$ are pairwise disjoint.
Indeed, suppose that a point $p$ belonged to both $\actL{u}$ and $\actL{v}$.
Then $\depth(\orig(p))$ would lie in $C_{\depth(u)}\cap C_{\depth(v)}$, and since canonical intervals appearing at the same level of $\Ddep$ are pairwise disjoint, this forces $C_{\depth(u)}=C_{\depth(v)}$, and in turn $\depth(u)=\depth(v)$ because each $C_d$ ends at $d$.
Now $p$ is active in both $u$ and $v$, with the same depth in $\Tcol$, so \Cref{lem:active-path} gives $u=v$, proving the claim.

\Cref{lem:baseline-entropy} therefore applies and bounds the total space of the ranking structures associated with level $\ell$ by $\Oh(mn(\log m+\log\log n))$ bits.
Summing over the $\Oh(\kappa)$ levels of $\Ddep$, all ranking structures together use
$\Oh(\kappa mn(\log m+\log\log n))$ bits.

We now turn to the lifting structures.
Recall that for every non-leftmost child $y$ of a node $x\in\Ddep$, and every node $u\in\Tcol$ with $\depth(u)\in\interval{y}$, we store one lifting structure with source set $\acteq{u}$ and target set $\actT{y}{u}$.
The per-node space bound of \Cref{lem:lift-primitive} depends only on the source set and coincides with the one assumed in \Cref{lem:baseline-entropy} (with $X_u=\acteq{u}$).

Here the situation differs from the ranking case.
The source sets $\set{\acteq{u} : u\in\Tcol}$ already partition the set of all points; what prevents a direct application of \Cref{lem:baseline-entropy} is that each $\acteq{u}$ is reused as the source set of several stored lifting structures.
We claim that each $\acteq{u}$ appears as a source set at most $\kappa$ times.
Indeed, $\acteq{u}$ is the source set of the structure stored for a pair $(y,u)$ only when $y$ is a node of $\Ddep$ with $\depth(u)\in\interval{y}$, and there are $\kappa$ such nodes, as each is an ancestor of the leaf corresponding to $\depth(u)$.

We group the stored lifting structures by the level of $\Ddep$ containing $y$.
By the observation above, each $\acteq{u}$ appears as the source set of at most one structure per group, so the source sets within a group are pairwise disjoint.
\Cref{lem:baseline-entropy} therefore bounds the total space of the group by $\Oh(mn(\log m+\log\log n))$ bits.
Summing over the $\Oh(\kappa)$ levels of $\Ddep$, all lifting structures together use
$\Oh(\kappa mn(\log m+\log\log n))$ bits.

As in the baseline structure, storing $\depth(\orig(p))$ for every point uses $\Oh(mn\log\log n)$ bits, and the reduction structure of \Cref{lem:two-cand} uses $\Oh(mn\log m)$ bits.
Hence the entire structure takes
$\Oh(\kappa mn(\log m+\log\log n))$
bits, as promised.
This concludes the proof of \Cref{thm:main}.

\section{Construction of Local Primitives}\label{sec:implementation}

We now describe the construction of the two local primitives stated in
\Cref{lem:rank-primitive,lem:lift-primitive}.
Both constructions are entirely local to one fixed node of $\Tcol$, and rely on a common decomposition of the active set into at most $4m$ `monotone' subsets, called \emph{quarter-rows}.
They also make use of the two succinct encodings for sparse sets and monotone sequences introduced in \Cref{lem:sparse,lem:mono}.
The lifting structure uses the ranking structure as one of its sub-components, so we present the ranking primitive first and then build the lifting primitive on top of it.

\paragraph{Local coordinates.}
Throughout this section we fix one node $u\in\Tcol$ and denote
\[
	\interval{u}=[a\dd b]
	\qquad\text{and}\qquad
	\lambda \coloneqq |\interval{u}| = b-a+1.
\]
For every point $p\in [1\dd m]\times \interval{u}$, we define its \emph{local column in $u$} by
\[
	\col_u(p)\coloneqq \col(p)-a+1 \in [1\dd \lambda].
\]
From this point onward, we will view columns through their local coordinates inside $\interval{u}$.

We assume that the 1D RMQ structures of \Cref{lem:rmq1d} have been built for all rows, and we will use them only to test, in constant time, whether a point is visible from a given column.

\paragraph{Quarter-rows.}
We describe a canonical decomposition of $\act{u}$ into \emph{quarter-rows} that underlies both local structures.
If $u$ is a leaf, $\act{u}=[1\dd m]\times\{j\}$ for a single column $j$, and we define the $m$ quarter-rows to be the singletons $\{(i,j)\}$ for $i\in[1\dd m]$; all claims below are immediate in this case, so assume $u$ is internal, with children covering the intervals $[a_1\dd b_1]$ and $[a_2\dd b_2]$.

For any $p\in\act{u}$, the column $\col(p)$ belongs to exactly one of these two child~intervals.
Since $p$ is active in $u$, it is visible from some column in $\partial u = \{a_1,b_1,a_2,b_2\}$, and by monotonicity of visibility along its row, $p$ is also visible from at least one endpoint of the child interval that contains $\col(p)$.
We assign $p$ a \emph{canonical endpoint}: the left endpoint of the child interval containing $\col(p)$ if $p$ is visible from it, and its right endpoint otherwise.
For each pair~$(i,c) \in[1\dd m] \times \{a_1,b_1,a_2,b_2\}$, the corresponding \emph{quarter-row} is the set of all points in $\act{u}$ in row $i$ whose canonical endpoint is $c$;
we write $Q_u(p)$ for the quarter-row containing~$p$.

This partitions $\act{u}$ into at most $4m$ quarter-rows.
Given $p$, the quarter-row $Q_u(p)$ can be identified using $\Oh(\log m)$ bits in $\Oh(1)$ time by locating the child interval containing $\col(p)$ and testing visibility from its endpoints using \Cref{lem:rmq1d}.

\begin{observation}\label{obs:quarter-row-monotone}
	Let $Q$ be a quarter-row, and list its points in increasing order of weight.
	Then their local columns form a strictly monotone sequence.
	More precisely: 
	\begin{enumerate}[(i)]
		\item if $Q$ is defined by the left endpoint of a child interval, the sequence is strictly increasing;
		\item if $Q$ is defined by the right endpoint of a child interval, the sequence is strictly decreasing.
	\end{enumerate}
\end{observation}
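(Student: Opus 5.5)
Let $Q$ be the quarter-row of row $i$ with canonical endpoint $c$. I would reduce everything to one fact: if a point $p=(i,j)$ is visible from $c$, then $p$ is the heaviest point on the closed segment of row $i$ between $j$ and $c$ (\Cref{def:visibility}). Take two distinct points $p=(i,j)$ and $p'=(i,j')$ of $Q$. I will show that the one whose column is farther from $c$ is the heavier one. Since both lie in the same child interval $[a_k\dd b_k]$, together with $c$, both columns lie on the same side of $c$. So the ordering by distance from $c$ is the ordering by column, increasing if $c$ is a left endpoint and decreasing if $c$ is a right endpoint. This gives (i) and (ii) at once, because local columns differ from columns only by the shift $-a+1$.

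\textbf{Case (i), $c=a_k$ is the left endpoint.} All points of $Q$ have columns in $[a_k\dd b_k]$ with $j\ge c$. Suppose $j<j'$. Since $p'$ is visible from $c$, it is the maximum of $\{i\}\times[c\dd j']$. This segment contains $(i,j)=p$, and $p\ne p'$, so $\weight(p)<\weight(p')$ because weights are distinct. Hence a larger column means a larger weight. Listing $Q$ by increasing weight therefore lists the columns in strictly increasing order. The columns are distinct because distinct points of $Q$ share the row $i$.

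\textbf{Case (ii), $c=b_k$ is the right endpoint.} Here I must be careful, because the canonical endpoint is the right one only when $p$ is \emph{not} visible from $a_k$. I therefore first check that every point of such a $Q$ really is visible from $b_k$. The paper's remark before the observation already gives this: $p$ is visible from some endpoint of its child interval, and it is not visible from $a_k$, so it must be visible from $b_k$. With that in hand, the mirror argument works. All columns satisfy $j\le c$. If $j<j'$, then $p$ is the maximum of $\{i\}\times[j\dd c]$, which contains $p'$, so $\weight(p)>\weight(p')$. Thus weight decreases as the column grows, and the sequence ordered by increasing weight has strictly decreasing columns.

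\textbf{Leaf case.} If $u$ is a leaf, every quarter-row is a singleton, so the statement holds trivially.

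The only step needing attention is the justification that membership in a quarter-row implies visibility from its defining endpoint, especially for right endpoints, where the definition is phrased negatively. Everything else is a one-line containment of segments. I would also note that if $a_k=b_k$, the child interval is a single column, each quarter-row has at most one point, and the claim is again trivial.
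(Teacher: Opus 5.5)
Your proof is correct and follows the paper's argument: in each case, the point whose column is farther from the canonical endpoint $c$ is, by visibility from $c$, the maximum of a row segment that contains the other point. Your explicit check that points of a right-endpoint quarter-row are visible from $b_k$, and your remark on single-column children, fill in details the paper leaves implicit.
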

\begin{proof}
	The leaf case is trivial, so assume that $u$ is internal.
	First suppose that $Q$ is defined by the left endpoint $c$ of one of the two child intervals of $u$.
	Let $p,q\in Q$ with $\col(p)<\col(q)$.
	Since both points lie in the same row and $q$ is visible from $c$, the point $q$ has the largest weight on the interval of that row between $c$ and $\col(q)$.
	This interval contains $\col(p)$, so $\weight(p)<\weight(q)$.
	Hence the weight is a strictly increasing function of the column.

	Now suppose that $Q$ is defined by the right endpoint $c$ of a child interval.
	Let again $p,q\in Q$ with $\col(p)<\col(q)$.
	Since $p$ is visible from $c$, the point $p$ has the largest weight on the interval of that row between $\col(p)$ and $c$.
	This interval contains $\col(q)$, so $\weight(p)>\weight(q)$.
	Hence the weight is a strictly decreasing function of the column.
\end{proof}

\begin{lemma}\label{claim:quarter-row-subset}
	Let $Q$ be a quarter-row and let $X\subseteq Q$.
	There is an encoding using
	\[
		\Oh\!\left(|X|+|X|\log\frac{\lambda}{|X|}\right)
	\]
	bits that, given a point $p\in X$, returns in $\Oh(1)$ time the rank of $p$ in $X$.
\end{lemma}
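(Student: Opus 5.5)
By \Cref{obs:quarter-row-monotone}, the weight order on $Q$ coincides with the order of local columns: increasing if $Q$ is defined by a left endpoint, decreasing if by a right endpoint. The same holds for any subset $X\subseteq Q$. The rank of $p$ in $X$ is therefore determined by the position of $\col_u(p)$ within the set $\set{\col_u(q) : q\in X}\subseteq[1\dd\lambda]$. Since all points of $Q$ lie in one row, distinct points of $X$ have distinct local columns, so this set has exactly $|X|$ elements.

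The plan is to store the set $S_X \coloneqq \set{\col_u(q) : q\in X}$ using the partial-rank structure of \Cref{lem:sparse} with universe $U=\lambda$ and $k=|X|$. This costs $\Oh(|X|+|X|\log(\lambda/|X|))$ bits, matching the claimed bound. We also store one bit recording whether $Q$ is increasing or decreasing, although this is already implied by which endpoint defines $Q$, and that is known to the caller.

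To answer a query for $p\in X$, first compute $\col_u(p)=\col(p)-a+1$ in constant time. Since $p\in X$, the value $\col_u(p)$ belongs to $S_X$, so the partial-rank query of \Cref{lem:sparse} is legal and returns $r=\rank_{S_X}(\col_u(p))$ in $\Oh(1)$ time. In the increasing case, the rank of $p$ in $X$ (the number of points of $X$ not heavier than $p$) is exactly $r$. In the decreasing case, it equals $|X|-r+1$, where $|X|$ is stored in $\Oh(\log\lambda)$ bits or recovered from the encoding.

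The one subtle point is that storing $|X|$ explicitly would cost $\Oh(\log\lambda)$ bits, which could exceed $\Oh(|X|+|X|\log(\lambda/|X|))$ when $|X|$ is very small. I expect this to be settled by the fact that the encoding of \Cref{lem:sparse} already determines $k$: its size is at least $\Omega(\log\binom{\lambda}{k})$, which covers $\log\lambda$ whenever $1\le k<\lambda$. Alternatively, we can avoid storing $|X|$ altogether by adding a sentinel, i.e., encoding $S_X$ reversed via the map $c\mapsto\lambda-c+1$ in the decreasing case so that the rank is read directly. This second option is cleaner, and it is the one I would use: in the decreasing case, store $\set{\lambda-c+1 : c\in S_X}$ instead, and query at $\lambda-\col_u(p)+1$. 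Then no case needs $|X|$.
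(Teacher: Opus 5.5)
Your proposal is correct and is essentially the paper's proof: store $\{\col_u(q) : q\in X\}$ with the partial-rank structure of \Cref{lem:sparse}, then use \Cref{obs:quarter-row-monotone} to turn the column rank $r$ into the weight rank, which is $r$ in the increasing case and $|X|+1-r$ in the decreasing case. Your worry about storing $|X|$ is unnecessary, because $\log\lambda = \Oh(k+k\log(\lambda/k))$ for every $1\le k\le\lambda$: if $k<\log\lambda$ then $\log(\lambda/k)=\Omega(\log\lambda)$, and otherwise the $k$ term already dominates. Your reflection trick $c\mapsto\lambda-c+1$ also works and avoids the question entirely.
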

\begin{proof}
	Store the set
	$C_X \coloneqq \{\col_u(p): p\in X\}\subseteq [1\dd \lambda]$
	using \Cref{lem:sparse}.
	This takes
	\[
		\Oh\!\left(|X|+|X|\log\frac{\lambda}{|X|}\right)
	\]
	bits and supports rank queries on local columns in constant time.

	By \Cref{obs:quarter-row-monotone}, inside $Q$ the weight is a strictly monotone function of the column.
	Hence, once we know the rank of $\col_u(p)$ in the sorted set $C_X$, we also know the rank of $p$ in~$X$.
	More explicitly, if the quarter-row is of the increasing type, then the two ranks coincide, whereas in the decreasing case they sum to $|X|+1$.
	In both cases the answer is obtained in $\Oh(1)$ time.
\end{proof}

We are now ready to prove the two local primitives of \Cref{lem:rank-primitive,lem:lift-primitive}, starting with \Cref{lem:rank-primitive}, which we restate below.

\RestateRankPrimitive*

\begin{proof}
	We partition $L$ according to the quarter-rows of $u$.
	For every quarter-row $Q$, we define
	$L_Q \coloneqq L\cap Q$.
	Let $\mathcal{Q}$ denote the family of quarter-rows $Q$ with $L_Q \neq \emptyset$.

	The structure has two layers.
	The first layer handles the order inside one quarter-row, and the second layer handles how the quarter-rows are interleaved in the global order of $L$.

	\paragraph{First layer.}
	For every nonempty $L_Q$, we build the ranking structure of \Cref{claim:quarter-row-subset}.
	Since the sets $L_Q$ partition $L$, \Cref{lem:logsum} with $t = |\mathcal{Q}| \le 4m$ bounds their total space by
	\[
		\sum_{Q \in \mathcal{Q}} \Oh\!\left(|L_Q|+|L_Q|\log\frac{\lambda}{|L_Q|}\right) \le \Oh\!\left(|L|+|L|\log\frac{m\lambda}{|L|}\right).
	\]

	\paragraph{Second layer.}
	Let $p_1,p_2,\dots,p_{|L|}$ be the points of $L$ listed in increasing order of weight.
	For every $Q \in \mathcal{Q}$, let $B_Q \coloneqq \{i\in[1\dd |L|] : p_i\in L_Q\}$ be the set of positions occupied by points of $L_Q$ in this sequence.
	We encode each $B_Q$ using \Cref{lem:sparse}, supporting select in $\Oh(1)$ time.
	Since the sets $L_Q$ partition $L$, \Cref{lem:logsum} with $t = |\mathcal{Q}| \le 4m$ bounds their total space by
	\[
		\sum_{Q \in \mathcal{Q}} \Oh\!\left(|L_Q|+|L_Q|\log\frac{|L|}{|L_Q|}\right)
		\le \Oh\!\left(|L|+|L|\log m\right) = \Oh(|L|\log m).
	\]

	\paragraph{Query algorithm.}
	Let $p\in L$ be the query point.
	We first determine its quarter-row $Q \coloneqq Q_u(p)$ in $\Oh(1)$ time.
	Using the structure stored for $L_Q$, we compute in $\Oh(1)$ time the rank $r$ of $p$ in $L_Q$.
	Since $p$ is the $r$-th lightest point of $L_Q$, its rank in $L$ equals $\select_{B_Q}(r)$, which we compute in $\Oh(1)$ time.
\end{proof}

We now proceed with the proof of \Cref{lem:lift-primitive}, restated below.

\RestateLiftPrimitive*

\begin{proof}
	We describe the structure for \emph{successor} queries; the predecessor case is symmetric.
	The construction has two layers: the first determines which quarter-row contains the successor of the query point, and the second locates the successor within that quarter-row.

	\paragraph{First layer.}
	We build a local ranking structure of \Cref{lem:rank-primitive} for the set $S$, using $\Oh(|S|\log m + |S|\log(m\lambda/|S|))$ bits.
	We also store an array of length $|S|$, where the $i$\nobreakdash-th entry is either the identifier of the quarter-row containing the successor of $s_i$ in $T$, or $\Top$ if no such successor exists, where $s_1,\dots,s_{|S|}$ are the points of $S$ listed in increasing order of weight.
	Since each identifier uses $\Oh(\log m)$ bits, the array uses $\Oh(|S|\log m)$ bits in total.

	\paragraph{Second layer.}
	For every quarter-row $Q$, we define
	\[
		S_Q \coloneqq \set{p\in S : \text{the successor of $p$ in $T$ belongs to } Q}.
	\]
	Let $\mathcal{Q}$ be the family of quarter-rows $Q$ with $S_Q \neq \emptyset$.
	For every $Q \in \mathcal{Q}$, we store two objects.

	First, we build the ranking structure of \Cref{lem:rank-primitive} for the set $S_Q$. All these structures~use
	\[
		\sum_{Q\in\mathcal{Q}}
		\Oh\!\left(|S_Q|\log m+|S_Q|\log\frac{m\lambda}{|S_Q|}\right)
	\]
	bits of space in total. The first summand contributes $\sum_{Q\in\mathcal{Q}} \Oh(|S_Q|\log m) \le \Oh(|S|\log m)$, since the sets $S_Q$ are pairwise disjoint subsets of $S$.
	For the second, \Cref{lem:logsum} gives
	\[
		\sum_{Q\in\mathcal{Q}} \Oh\!\left(|S_Q|\log\frac{m\lambda}{|S_Q|}\right)
		\le \Oh\!\left(|S|\log\frac{m^2\lambda}{|S|}\right)
		= \Oh\!\left(|S|\log m+|S|\log\frac{m\lambda}{|S|}\right).
	\]

	Second, let $p_1,\dots,p_{|S_Q|}$ be the points of $S_Q$ in increasing order of weight, and let $q_i$ be the successor of $p_i$ in $T$.
	We store the sequence $c_Q[i] \coloneqq \col_u(q_i)$ of their local columns.

	The sequence $c_Q$ is monotone: if $p_i$ is lighter than $p_j$, then $q_i$ is no heavier than $q_j$, and since all $q_i$ belong to the same quarter-row $Q$, their local columns form a monotone sequence, by \Cref{obs:quarter-row-monotone}.
	Thus, by \Cref{lem:mono}, each $c_Q$ can be stored in $\Oh(|S_Q|+|S_Q|\log(\lambda/|S_Q|))$ bits with $\Oh(1)$-time access, and \Cref{lem:logsum} bounds their total space~by
	\[
		\sum_{Q\in\mathcal{Q}} \Oh\!\left(|S_Q|+|S_Q|\log\frac{\lambda}{|S_Q|}\right)
		\le \Oh\!\left(|S|+|S|\log\frac{m\lambda}{|S|}\right).
	\]

	\paragraph{Query algorithm.}
	Let $p\in S$ be the query point.
	We compute the rank $r$ of $p$ in $S$ and inspect the $r$\nobreakdash-th entry of the array stored in the first layer.
	If this entry is $\Top$, then $p$ has no successor in $T$.
	Otherwise, it identifies a quarter-row $Q$ containing the successor of $p$.
	Since $p\in S_Q$, we query the local ranking structure for $S_Q$ to obtain the rank $r_Q$ of $p$ in $S_Q$, and then extract the local column $c_Q[r_Q]$ of the successor.
	Since the identifier of $Q$ determines the row of the successor, we reconstruct the successor point in constant time.
\end{proof}

\bibliographystyle{plainurl}
\bibliography{biblio}

\end{document}